\definecolor{dullmagenta}{rgb}{0.4,0,0.4}   
\definecolor{darkblue}{rgb}{0,0,0.4}
\def\blx@bblfile@bibtex{%
  \blx@secinit
  \begingroup
  \blx@bblstart
%
%
%
\begingroup
\makeatletter
\@ifundefined{ver@biblatex.sty}
  {\@latex@error
     {Missing 'biblatex' package}
     {The bibliography requires the 'biblatex' package.}
      \aftergroup }
  {}
\endgroup

\entry{tomamichel_quantum_2015v1}{article}{}
  \name{author}{2}{}{%
    {{}%
     {Tomamichel}{T.}%
     {Marco}{M.}%
     {}{}%
     {}{}}%
    {{}%
     {Berta}{B.}%
     {Mario}{M.}%
     {}{}%
     {}{}}%
  }
  \strng{namehash}{TMBM1}
  \strng{fullhash}{TMBM1}
  \field{abstract}{%
  The quantum capacity of a channel is often used as a single measure to
  characterize the ability of a channel to transmit quantum information
  coherently. The capacity determines the maximal asymptotic rate at which we
  can code reliably over a channel. Here we argue that this asymptotic
  treatment is insufficient to the point of being irrelevant in the quantum
  setting when decoherence limits our ability to manipulate large quantum
  systems. For all practical purposes we should instead focus on the tradeoff
  between three parameters: the rate of the code, the number of coherent uses
  of the channel, and the fidelity of the transmission. The aim is then to
  specify the region determined by allowed combinations of these parameters.
  Towards this goal, we find an approximate characterization of the region of
  allowed triplets for the qubit dephasing channel and for the erasure channel
  with classical post-processing. In each case the region is parametrized by a
  second channel parameter, the dispersion of the quantum capacity. In the
  process we also develop general bounds on the achievable region that are
  valid for all quantum channels.%
  }
  \field{title}{Quantum {Coding} over {Dephasing} and {Erasure} {Channels} with
  {Finite} {Resources}}
  \verb{url}
  \verb http://arxiv.org/abs/1504.04617v1
  \endverb
  \verb{file}
  \verb Tomamichel_Berta_2015_Quantum Coding over Dephasing and Erasure Channel
  \verb s with Finite Resources.pdf:/Users/joe/store/zotero/storage/3KGMTJFA/To
  \verb mamichel_Berta_2015_Quantum Coding over Dephasing and Erasure Channels
  \verb with Finite Resources.pdf:application/pdf
  \endverb
  \field{journaltitle}{arXiv:1504.04617v1 [quant-ph]}
  \field{month}{04}
  \field{year}{2015}
  \field{urlday}{20}
  \field{urlmonth}{04}
  \field{urlyear}{2015}
\endentry

\entry{tomamichel_quantum_2015}{article}{}
  \name{author}{3}{}{%
    {{}%
     {Tomamichel}{T.}%
     {Marco}{M.}%
     {}{}%
     {}{}}%
    {{}%
     {Berta}{B.}%
     {Mario}{M.}%
     {}{}%
     {}{}}%
    {{}%
     {Renes}{R.}%
     {Joseph~M.}{J.~M.}%
     {}{}%
     {}{}}%
  }
  \strng{namehash}{TMBMRJM1}
  \strng{fullhash}{TMBMRJM1}
  \field{abstract}{%
  The quantum capacity of a memoryless channel is often used as a single figure
  of merit to characterize its ability to transmit quantum information
  coherently. The capacity determines the maximal rate at which we can code
  reliably over asymptotically many uses of the channel. We argue that this
  asymptotic treatment is insufficient to the point of being irrelevant in the
  quantum setting where decoherence severely limits our ability to manipulate
  large quantum systems in the encoder and decoder. For all practical purposes
  we should instead focus on the trade-off between three parameters: the rate
  of the code, the number of coherent uses of the channel, and the fidelity of
  the transmission. The aim is then to specify the region determined by allowed
  combinations of these parameters. Towards this goal, we find approximate and
  exact characterizations of the region of allowed triplets for the qubit
  dephasing channel and for the erasure channel with classical post-processing
  assistance. In each case the region is parametrized by a second channel
  parameter, the quantum channel dispersion. In the process we also develop
  several general inner (achievable) and outer (converse) bounds on the coding
  region that are valid for all finite-dimensional quantum channels and can be
  computed efficiently. Applied to the depolarizing channel, this allows us to
  determine a lower bound on the number of coherent uses of the channel
  necessary to witness super-additivity of the coherent information.%
  }
  \field{title}{Quantum {Coding} with {Finite} {Resources}}
  \verb{url}
  \verb http://arxiv.org/abs/1504.04617v2
  \endverb
  \verb{file}
  \verb Tomamichel et al_2015_Quantum Coding with Finite Resources.pdf:/Users/j
  \verb oe/store/zotero/storage/WFZ9JRAE/Tomamichel et al_2015_Quantum Coding w
  \verb ith Finite Resources.pdf:application/pdf
  \endverb
  \field{journaltitle}{arXiv:1504.04617v2 [quant-ph]}
  \field{month}{04}
  \field{year}{2015}
  \field{urlday}{29}
  \field{urlmonth}{05}
  \field{urlyear}{2015}
\endentry

\entry{arikan_channel_2009}{article}{}
  \name{author}{1}{}{%
    {{}%
     {Ar{\i}kan}{A.}%
     {E.}{E.}%
     {}{}%
     {}{}}%
  }
  \keyw{binary codes, Capacity-achieving codes, Capacity planning, channel
  capacity, channel coding, channel polarization, codes, code sequence,
  Councils, decoding, information theory, memoryless systems, Noise
  cancellation, Plotkin construction, polar codes, Polarization, probability,
  Reed– Muller (RM) codes, successive cancellation decoding, successive
  cancellation decoding algorithm, symmetric binary-input memoryless channel}
  \strng{namehash}{AE1}
  \strng{fullhash}{AE1}
  \field{abstract}{%
  A method is proposed, called channel polarization, to construct code
  sequences that achieve the symmetric capacity I(W) of any given binary-input
  discrete memoryless channel (B-DMC) W. The symmetric capacity is the highest
  rate achievable subject to using the input letters of the channel with equal
  probability. Channel polarization refers to the fact that it is possible to
  synthesize, out of N independent copies of a given B-DMC W, a second set of N
  binary-input channels \{WN(i)1 les i les N\} such that, as N becomes large,
  the fraction of indices i for which I(WN(i)) is near 1 approaches I(W) and
  the fraction for which I(WN(i)) is near 0 approaches 1-I(W). The polarized
  channels \{WN(i)\} are well-conditioned for channel coding: one need only
  send data at rate 1 through those with capacity near 1 and at rate 0 through
  the remaining. Codes constructed on the basis of this idea are called polar
  codes. The paper proves that, given any B-DMC W with I(W) {\textgreater} 0
  and any target rate R{\textless} I(W) there exists a sequence of polar codes
  \{Cfrn;nges1\} such that Cfrn has block-length N=2n , rate ges R, and
  probability of block error under successive cancellation decoding bounded as
  Pe(N,R) les O(N-1/4) independently of the code rate. This performance is
  achievable by encoders and decoders with complexity O(N logN) for each.%
  }
  \verb{doi}
  \verb 10.1109/TIT.2009.2021379
  \endverb
  \field{issn}{0018-9448}
  \field{number}{7}
  \field{pages}{3051\bibrangedash 3073}
  \field{shorttitle}{Channel {Polarization}}
  \field{title}{Channel {Polarization}: {A} {Method} for {Constructing}
  {Capacity}-{Achieving} {Codes} for {Symmetric} {Binary}-{Input} {Memoryless}
  {Channels}}
  \verb{url}
  \verb http://dx.doi.org/10.1109/TIT.2009.2021379
  \endverb
  \field{volume}{55}
  \verb{file}
  \verb Arikan - 2009 - Channel Polarization A Method for Constructing Ca.pdf:/
  \verb Users/joe/store/zotero/storage/9KWIT886/Arikan - 2009 - Channel Polariz
  \verb ation A Method for Constructing Ca.pdf:application/pdf
  \endverb
  \field{journaltitle}{IEEE Transactions on Information Theory}
  \field{month}{07}
  \field{year}{2009}
\endentry

\entry{kudekar_spatially_2012}{article}{}
  \name{author}{3}{}{%
    {{}%
     {Kudekar}{K.}%
     {Shrinivas}{S.}%
     {}{}%
     {}{}}%
    {{}%
     {Richardson}{R.}%
     {Tom}{T.}%
     {}{}%
     {}{}}%
    {{}%
     {Urbanke}{U.}%
     {Ruediger}{R.}%
     {}{}%
     {}{}}%
  }
  \keyw{Computer Science - Information Theory}
  \strng{namehash}{KSRTUR1}
  \strng{fullhash}{KSRTUR1}
  \field{abstract}{%
  We investigate spatially coupled code ensembles. For transmission over the
  binary erasure channel, it was recently shown that spatial coupling increases
  the belief propagation threshold of the ensemble to essentially the maximum
  a-priori threshold of the underlying component ensemble. This explains why
  convolutional LDPC ensembles, originally introduced by Felstrom and
  Zigangirov, perform so well over this channel. We show that the equivalent
  result holds true for transmission over general binary-input memoryless
  output-symmetric channels. More precisely, given a desired error probability
  and a gap to capacity, we can construct a spatially coupled ensemble which
  fulfills these constraints universally on this class of channels under belief
  propagation decoding. In fact, most codes in that ensemble have that
  property. The quantifier universal refers to the single ensemble/code which
  is good for all channels but we assume that the channel is known at the
  receiver. The key technical result is a proof that under belief propagation
  decoding spatially coupled ensembles achieve essentially the area threshold
  of the underlying uncoupled ensemble. We conclude by discussing some
  interesting open problems.%
  }
  \field{title}{Spatially {Coupled} {Ensembles} {Universally} {Achieve}
  {Capacity} under {Belief} {Propagation}}
  \verb{url}
  \verb http://arxiv.org/abs/1201.2999
  \endverb
  \verb{file}
  \verb 1201.2999 PDF:/Users/joe/store/zotero/storage/TNVNCKAU/Kudekar et al. -
  \verb  2012 - Spatially Coupled Ensembles Universally Achieve Ca.pdf:applicat
  \verb ion/pdf
  \endverb
  \field{journaltitle}{arXiv:1201.2999 [cs.IT]}
  \field{month}{01}
  \field{year}{2012}
  \field{urlday}{06}
  \field{urlmonth}{09}
  \field{urlyear}{2012}
\endentry

\entry{polyanskiy_channel_2010}{article}{}
  \name{author}{3}{}{%
    {{}%
     {Polyanskiy}{P.}%
     {Y.}{Y.}%
     {}{}%
     {}{}}%
    {{}%
     {Poor}{P.}%
     {H.V.}{H.}%
     {}{}%
     {}{}}%
    {{}%
     {Verd\'u}{V.}%
     {S.}{S.}%
     {}{}%
     {}{}}%
  }
  \keyw{Achievability, Acoustic noise, AWGN, Capacity planning, channel
  capacity, channel coding, channel dispersion, codes, coding for noisy
  channels, complementary Gaussian cumulative distribution function, converse,
  Distribution functions, error probability, error statistics, finite
  blocklength regime, Helium, maximal channel coding rate, Shannon theory,
  Upper bound}
  \strng{namehash}{PYPHVS1}
  \strng{fullhash}{PYPHVS1}
  \field{abstract}{%
  This paper investigates the maximal channel coding rate achievable at a given
  blocklength and error probability. For general classes of channels new
  achievability and converse bounds are given, which are tighter than existing
  bounds for wide ranges of parameters of interest, and lead to tight
  approximations of the maximal achievable rate for blocklengths n as short as
  100. It is also shown analytically that the maximal rate achievable with
  error probability Â¿ isclosely approximated by C - Â¿(V/n) Q-1(Â¿)
  where C is the capacity, V is a characteristic of the channel referred to as
  channel dispersion , and Q is the complementary Gaussian cumulative
  distribution function.%
  }
  \verb{doi}
  \verb 10.1109/TIT.2010.2043769
  \endverb
  \field{issn}{0018-9448}
  \field{number}{5}
  \field{pages}{2307\bibrangedash 2359}
  \field{title}{Channel {Coding} {Rate} in the {Finite} {Blocklength} {Regime}}
  \field{volume}{56}
  \verb{file}
  \verb Polyanskiy et al. - 2010 - Channel Coding Rate in the Finite Blocklengt
  \verb h Regi.pdf:/Users/joe/store/zotero/storage/RSP2ARW2/Polyanskiy et al. -
  \verb  2010 - Channel Coding Rate in the Finite Blocklength Regi.pdf:applicat
  \verb ion/pdf
  \endverb
  \field{journaltitle}{IEEE Transactions on Information Theory}
  \field{year}{2010}
\endentry

\entry{matthews_linear_2012}{article}{}
  \name{author}{1}{}{%
    {{}%
     {Matthews}{M.}%
     {W.}{W.}%
     {}{}%
     {}{}}%
  }
  \strng{namehash}{MW1}
  \strng{fullhash}{MW1}
  \verb{doi}
  \verb 10.1109/TIT.2012.2210695
  \endverb
  \field{issn}{0018-9448, 1557-9654}
  \field{number}{12}
  \field{pages}{7036\bibrangedash 7044}
  \field{title}{A {Linear} {Program} for the {Finite} {Block} {Length}
  {Converse} of {Polyanskiy}-{Poor}-{Verdú} {Via} {Nonsignaling} {Codes}}
  \verb{url}
  \verb http://ieeexplore.ieee.org/lpdocs/epic03/wrapper.htm?arnumber=6269084
  \endverb
  \field{volume}{58}
  \verb{file}
  \verb Matthews_2012_A Linear Program for the Finite Block Length Converse of
  \verb Polyanskiy-Poor-Verdú.pdf:/Users/joe/store/zotero/storage/A8H67TK9/Mat
  \verb thews_2012_A Linear Program for the Finite Block Length Converse of Pol
  \verb yanskiy-Poor-Verdú.pdf:application/pdf
  \endverb
  \field{journaltitle}{IEEE Transactions on Information Theory}
  \field{month}{12}
  \field{year}{2012}
  \field{urlday}{21}
  \field{urlmonth}{05}
  \field{urlyear}{2014}
\endentry

\entry{polyanskiy_saddle_2013}{article}{}
  \name{author}{1}{}{%
    {{}%
     {Polyanskiy}{P.}%
     {Y.}{Y.}%
     {}{}%
     {}{}}%
  }
  \keyw{binary erasure channel, Binary hypothesis testing, channel coding,
  channel symmetries, channel symmetry group, converse theorem, decoding, error
  exponents, Extraterrestrial measurements, information-spectrum converse
  bound, input distribution, input distributions, memoryless systems, minimax,
  minimax converse, minimax metaconverse, minimax techniques, nonasymptotic
  analysis, optimal output distributions, Optimization, optimization problem,
  saddle point, Shannon theory, spectrum bound, Standards, symmetry
  considerations, Testing, Verdú-Han converse bound}
  \strng{namehash}{PY1}
  \strng{fullhash}{PY1}
  \field{abstract}{%
  A minimax metaconverse has recently been proposed as a simultaneous
  generalization of a number of classical results and a tool for the
  nonasymptotic analysis. In this paper, it is shown that the order of
  optimizing the input and output distributions can be interchanged without
  affecting the bound. In the course of the proof, a number of auxiliary
  results of separate interest are obtained. In particular, it is shown that
  the optimization problem is convex and can be solved in many cases by the
  symmetry considerations. As a consequence, it is demonstrated that in the
  latter cases, the (multiletter) input distribution in information-spectrum
  (Verdú-Han) converse bound can be taken to be a (memoryless) product of
  single-letter ones. A tight converse for the binary erasure channel is
  rederived by computing the optimal (nonproduct) output distribution. For
  discrete memoryless channels, a conjecture of Poor and Verdú regarding the
  tightness of the information spectrum bound on the error exponents is
  resolved in the negative. Concept of the channel symmetry group is
  established and relations with the definitions of symmetry by Gallager and
  Dobrushin are investigated.%
  }
  \verb{doi}
  \verb 10.1109/TIT.2012.2236382
  \endverb
  \field{issn}{0018-9448}
  \field{number}{5}
  \field{pages}{2576\bibrangedash 2595}
  \field{title}{Saddle {Point} in the {Minimax} {Converse} for {Channel}
  {Coding}}
  \field{volume}{59}
  \verb{file}
  \verb Polyanskiy - 2013 - Saddle Point in the Minimax Converse for Channel C.
  \verb pdf:/Users/joe/store/zotero/storage/I476TNZM/Polyanskiy - 2013 - Saddle
  \verb  Point in the Minimax Converse for Channel C.pdf:application/pdf
  \endverb
  \field{journaltitle}{IEEE Transactions on Information Theory}
  \field{year}{2013}
\endentry

\entry{matthews_finite_2014}{article}{}
  \name{author}{2}{}{%
    {{}%
     {Matthews}{M.}%
     {William}{W.}%
     {}{}%
     {}{}}%
    {{}%
     {Wehner}{W.}%
     {Stephanie}{S.}%
     {}{}%
     {}{}}%
  }
  \strng{namehash}{MWWS1}
  \strng{fullhash}{MWWS1}
  \verb{doi}
  \verb 10.1109/TIT.2014.2353614
  \endverb
  \field{issn}{0018-9448, 1557-9654}
  \field{number}{11}
  \field{pages}{7317\bibrangedash 7329}
  \field{title}{Finite {Blocklength} {Converse} {Bounds} for {Quantum}
  {Channels}}
  \verb{url}
  \verb http://ieeexplore.ieee.org/lpdocs/epic03/wrapper.htm?arnumber=6891222
  \endverb
  \field{volume}{60}
  \verb{file}
  \verb Matthews_Wehner_2014_Finite Blocklength Converse Bounds for Quantum Cha
  \verb nnels.pdf:/Users/joe/store/zotero/storage/XMIG5P5V/Matthews_Wehner_2014
  \verb _Finite Blocklength Converse Bounds for Quantum Channels.pdf:applicatio
  \verb n/pdf
  \endverb
  \field{journaltitle}{IEEE Transactions on Information Theory}
  \field{month}{11}
  \field{year}{2014}
  \field{urlday}{20}
  \field{urlmonth}{04}
  \field{urlyear}{2015}
\endentry

\entry{leung_power_2014}{article}{}
  \name{author}{2}{}{%
    {{}%
     {Leung}{L.}%
     {Debbie}{D.}%
     {}{}%
     {}{}}%
    {{}%
     {Matthews}{M.}%
     {William}{W.}%
     {}{}%
     {}{}}%
  }
  \strng{namehash}{LDMW1}
  \strng{fullhash}{LDMW1}
  \field{abstract}{%
  We derive one-shot upper bounds for quantum noisy channel codes. We do so by
  regarding a channel code as a bipartite operation with an encoder belonging
  to the sender and a decoder belonging to the receiver, and imposing
  constraints on the bipartite operation. We investigate the power of codes
  whose bipartite operation is non-signalling from Alice to Bob,
  positive-partial transpose (PPT) preserving, or both, and derive a simple
  semidefinite program for the achievable entanglement fidelity. Using the
  semidefinite program, we show that the non-signalling assisted quantum
  capacity for memoryless channels is equal to the entanglement-assisted
  capacity. We also relate our PPT-preserving codes and the PPT-preserving
  entanglement distillation protocols studied by Rains. Applying these results
  to a concrete example, the 3-dimensional Werner-Holevo channel, we find that
  codes that are non-signalling and PPT-preserving can be strictly less
  powerful than codes satisfying either one of the constraints, and therefore
  provide a tighter bound for unassisted codes. Furthermore, PPT-preserving
  non-signalling codes can send one qubit perfectly over two uses of the
  channel, which has no quantum capacity. We discuss whether this can be
  interpreted as a form of superactivation of quantum capacity.%
  }
  \field{title}{On the power of {PPT}-preserving and non-signalling codes}
  \verb{url}
  \verb http://arxiv.org/abs/1406.7142
  \endverb
  \verb{file}
  \verb Leung_Matthews_2014_On the power of PPT-preserving and non-signalling c
  \verb odes.pdf:/Users/joe/store/zotero/storage/9R9J4RAH/Leung_Matthews_2014_O
  \verb n the power of PPT-preserving and non-signalling codes.pdf:application/
  \verb pdf
  \endverb
  \field{journaltitle}{arXiv:1406.7142 [quant-ph]}
  \field{month}{06}
  \field{year}{2014}
  \field{urlday}{12}
  \field{urlmonth}{02}
  \field{urlyear}{2015}
\endentry

\entry{choi_completely_1975}{article}{}
  \name{author}{1}{}{%
    {{}%
     {Choi}{C.}%
     {Man-Duen}{M.-D.}%
     {}{}%
     {}{}}%
  }
  \strng{namehash}{CMD1}
  \strng{fullhash}{CMD1}
  \field{abstract}{%
  A linear map Φ from Mn to Mm is completely positive iff it admits an
  expression Φ(A)=ΣiV∗iAVi where Vi are n×m matrices.%
  }
  \verb{doi}
  \verb 10.1016/0024-3795(75)90075-0
  \endverb
  \field{issn}{0024-3795}
  \field{number}{3}
  \field{pages}{285\bibrangedash 290}
  \field{title}{Completely positive linear maps on complex matrices}
  \verb{url}
  \verb http://www.sciencedirect.com/science/article/pii/0024379575900750
  \endverb
  \field{volume}{10}
  \verb{file}
  \verb Choi - 1975 - Completely positive linear maps on complex matrice.pdf:/U
  \verb sers/joe/store/zotero/storage/JZWTF2CH/Choi - 1975 - Completely positiv
  \verb e linear maps on complex matrice.pdf:application/pdf
  \endverb
  \field{journaltitle}{Linear Algebra and its Applications}
  \field{month}{06}
  \field{year}{1975}
  \field{urlday}{30}
  \field{urlmonth}{08}
  \field{urlyear}{2013}
\endentry

\entry{jamiolkowski_linear_1972}{article}{}
  \name{author}{1}{}{%
    {{}%
     {Jamiołkowski}{J.}%
     {A.}{A.}%
     {}{}%
     {}{}}%
  }
  \strng{namehash}{JA1}
  \strng{fullhash}{JA1}
  \field{abstract}{%
  This work may be considered a completion of the paper by J. de Pillis: Linear
  transformations which preserve Hermitian and positive semidefinite operators,
  published in 1967 [2]: necessary conditions have been formulated. Let A1 be
  the full algebra of linear operators on the n-dimensional Hilbert space H1,
  and let A2 be the full algebra of linear operators on the m-dimensional
  Hilbert space H2. Let L(A1,A2) denote the complex spaceof linear maps from A1
  to A2 and S denotes the cone of all T ϵ L(A1,A2 which send positive
  semidefinite operators from A1 to positive semidefinite operators from A2.
  The aim of this paper is to present a necessary and sufficient condition for
  a transformation in L(A1, A2) to be in the cone S, and to preserve trace of
  the operators.%
  }
  \verb{doi}
  \verb 10.1016/0034-4877(72)90011-0
  \endverb
  \field{issn}{0034-4877}
  \field{number}{4}
  \field{pages}{275\bibrangedash 278}
  \field{title}{Linear transformations which preserve trace and positive
  semidefiniteness of operators}
  \verb{url}
  \verb http://www.sciencedirect.com/science/article/pii/0034487772900110
  \endverb
  \field{volume}{3}
  \verb{file}
  \verb Jamiołkowski - 1972 - Linear transformations which preserve trace and
  \verb po.pdf:/Users/joe/store/zotero/storage/4G5636UF/Jamiołkowski - 1972 -
  \verb Linear transformations which preserve trace and po.pdf:application/pdf
  \endverb
  \field{journaltitle}{Reports on Mathematical Physics}
  \field{month}{12}
  \field{year}{1972}
  \field{urlday}{30}
  \field{urlmonth}{08}
  \field{urlyear}{2013}
\endentry

\entry{leifer_towards_2013}{article}{}
  \name{author}{2}{}{%
    {{}%
     {Leifer}{L.}%
     {M.~S.}{M.~S.}%
     {}{}%
     {}{}}%
    {{}%
     {Spekkens}{S.}%
     {Robert~W.}{R.~W.}%
     {}{}%
     {}{}}%
  }
  \strng{namehash}{LMSSRW1}
  \strng{fullhash}{LMSSRW1}
  \field{abstract}{%
  Quantum theory can be viewed as a generalization of classical probability
  theory, but the analogy as it has been developed so far is not complete.
  Whereas the manner in which inferences are made in classical probability
  theory is independent of the causal relation that holds between the
  conditioned variable and the conditioning variable, in the conventional
  quantum formalism, there is a significant difference between how one treats
  experiments involving two systems at a single time and those involving a
  single system at two times. In this article, we develop the formalism of
  quantum conditional states, which provides a unified description of these two
  sorts of experiment. In addition, concepts that are distinct in the
  conventional formalism become unified: Channels, sets of states, and positive
  operator valued measures are all seen to be instances of conditional states;
  the action of a channel on a state, ensemble averaging, the Born rule, the
  composition of channels, and nonselective state-update rules are all seen to
  be instances of belief propagation. Using a quantum generalization of
  Bayes’ theorem and the associated notion of Bayesian conditioning, we also
  show that the remote steering of quantum states can be described within our
  formalism as a mere updating of beliefs about one system given new
  information about another, and retrodictive inferences can be expressed using
  the same belief propagation rule as is used for predictive inferences.
  Finally, we show that previous arguments for interpreting the projection
  postulate as a quantum generalization of Bayesian conditioning are based on a
  misleading analogy and that it is best understood as a combination of belief
  propagation (corresponding to the nonselective state-update map) and
  conditioning on the measurement outcome.%
  }
  \verb{doi}
  \verb 10.1103/PhysRevA.88.052130
  \endverb
  \field{number}{5}
  \field{pages}{052130}
  \field{title}{Towards a formulation of quantum theory as a causally neutral
  theory of {Bayesian} inference}
  \verb{url}
  \verb http://link.aps.org/doi/10.1103/PhysRevA.88.052130
  \endverb
  \field{volume}{88}
  \verb{file}
  \verb Leifer and Spekkens - 2013 - Towards a formulation of quantum theory as
  \verb  a causa.pdf:/Users/joe/store/zotero/storage/73XWVRJP/Leifer and Spekke
  \verb ns - 2013 - Towards a formulation of quantum theory as a causa.pdf:appl
  \verb ication/pdf
  \endverb
  \field{journaltitle}{Physical Review A}
  \field{month}{11}
  \field{year}{2013}
  \field{urlday}{05}
  \field{urlmonth}{03}
  \field{urlyear}{2015}
\endentry

\entry{rains_semidefinite_2001}{article}{}
  \name{author}{1}{}{%
    {{}%
     {Rains}{R.}%
     {E.~M}{E.~M}%
     {}{}%
     {}{}}%
  }
  \keyw{1-local distillation, asymmetric Werner states, clones, codes,
  distillable entanglement, hashing lower bound, Hilbert space, information
  theory, isotropic states, linear program, Linear programming, maximally
  correlated states, maximum fidelity, positive partial transpose distillation
  protocol, Production, protocols, quantum codes, quantum communication,
  quantum entanglement, Quantum mechanics, Rain, Refining, semidefinite
  program, symmetry, Upper bound, upper bounds}
  \strng{namehash}{REM1}
  \strng{fullhash}{REM1}
  \field{abstract}{%
  We show that the maximum fidelity obtained by a positive partial transpose
  (p.p.t.) distillation protocol is given by the solution to a certain
  semidefinite program. This gives a number of new lower and upper bounds on
  p.p.t. distillable entanglement (and thus new upper bounds on 2-locally
  distillable entanglement). In the presence of symmetry, the semidefinite
  program simplifies considerably, becoming a linear program in the case of
  isotropic and Werner states. Using these techniques, we determine the p.p.t.
  distillable entanglement of asymmetric Werner states and “maximally
  correlated” states. We conclude with a discussion of possible applications
  of semidefinite programming to quantum codes and 1-local distillation%
  }
  \verb{doi}
  \verb 10.1109/18.959270
  \endverb
  \field{issn}{0018-9448}
  \field{number}{7}
  \field{pages}{2921\bibrangedash 2933}
  \field{title}{A semidefinite program for distillable entanglement}
  \field{volume}{47}
  \verb{file}
  \verb Rains - 2001 - A semidefinite program for distillable entanglemen.pdf:/
  \verb Users/joe/store/zotero/storage/97AXJ2TB/Rains - 2001 - A semidefinite p
  \verb rogram for distillable entanglemen.pdf:application/pdf
  \endverb
  \field{journaltitle}{IEEE Transactions on Information Theory}
  \field{month}{11}
  \field{year}{2001}
\endentry

\entry{boyd_convex_2004}{book}{}
  \name{author}{2}{}{%
    {{}%
     {Boyd}{B.}%
     {Stephen}{S.}%
     {}{}%
     {}{}}%
    {{}%
     {Vandenberghe}{V.}%
     {Lieven}{L.}%
     {}{}%
     {}{}}%
  }
  \list{publisher}{1}{%
    {Cambridge University Press}%
  }
  \strng{namehash}{BSVL1}
  \strng{fullhash}{BSVL1}
  \field{isbn}{0521833787}
  \field{title}{Convex {Optimization}}
  \verb{file}
  \verb Boyd and Vandenberghe - 2004 - Convex Optimization.pdf:/Users/joe/store
  \verb /zotero/storage/G3G8QEWR/Boyd and Vandenberghe - 2004 - Convex Optimiza
  \verb tion.pdf:application/pdf
  \endverb
  \field{year}{2004}
\endentry

\entry{watrous_quantum_2011}{article}{}
  \name{author}{1}{}{%
    {{}%
     {Watrous}{W.}%
     {John}{J.}%
     {}{}%
     {}{}}%
  }
  \strng{namehash}{WJ1}
  \strng{fullhash}{WJ1}
  \field{title}{Quantum {Information} {Theory}}
  \verb{url}
  \verb https://cs.uwaterloo.ca/~watrous/LectureNotes.html
  \endverb
  \verb{file}
  \verb Watrous - 2011 - Quantum Information Theory.pdf:/Users/joe/store/zotero
  \verb /storage/N4Z9SMAR/Watrous - 2011 - Quantum Information Theory.pdf:appli
  \verb cation/pdf
  \endverb
  \field{type}{Lecture {Notes}}
  \field{journaltitle}{Lecture notes}
  \field{year}{2011}
\endentry

\entry{rains_bound_1999}{article}{}
  \name{author}{1}{}{%
    {{}%
     {Rains}{R.}%
     {E.~M.}{E.~M.}%
     {}{}%
     {}{}}%
  }
  \strng{namehash}{REM2}
  \strng{fullhash}{REM2}
  \field{abstract}{%
  The best bound known on 2-locally distillable entanglement is that of Vedral
  and Plenio, involving a certain measure of entanglement based on relative
  entropy. It turns out that a related argument can be used to give an even
  stronger bound; we give this bound, and examine some of its properties. In
  particular, and in contrast to the earlier bounds, the new bound is not
  additive in general. We give an example of a state for which the bound fails
  to be additive, as well as a number of states for which the bound is
  additive.%
  }
  \verb{doi}
  \verb 10.1103/PhysRevA.60.179
  \endverb
  \field{number}{1}
  \field{pages}{179}
  \field{title}{Bound on distillable entanglement}
  \verb{url}
  \verb http://link.aps.org/abstract/PRA/v60/p179
  \endverb
  \field{volume}{60}
  \verb{file}
  \verb Rains - 1999 - Bound on distillable entanglement.pdf:/Users/joe/store/z
  \verb otero/storage/E8HI42J9/Rains - 1999 - Bound on distillable entanglement
  \verb .pdf:application/pdf;Rains - 1999 - Erratum Bound on distillable entang
  \verb lement.pdf:/Users/joe/store/zotero/storage/4BS5V9I7/Rains - 1999 - Erra
  \verb tum Bound on distillable entanglement.pdf:application/pdf
  \endverb
  \field{journaltitle}{Physical Review A}
  \field{month}{07}
  \field{year}{1999}
  \field{urlday}{12}
  \field{urlmonth}{09}
  \field{urlyear}{2008}
\endentry

\entry{neumann_zur_1928}{article}{}
  \name{author}{1}{}{%
    {{}%
     {Neumann}{N.}%
     {J.~v}{J.~v}%
     {}{}%
     {}{}}%
  }
  \strng{namehash}{NJv1}
  \strng{fullhash}{NJv1}
  \verb{doi}
  \verb 10.1007/BF01448847
  \endverb
  \field{issn}{0025-5831, 1432-1807}
  \field{number}{1}
  \field{pages}{295\bibrangedash 320}
  \field{title}{Zur {Theorie} der {Gesellschaftsspiele}}
  \verb{url}
  \verb http://link.springer.com/article/10.1007/BF01448847
  \endverb
  \field{volume}{100}
  \verb{file}
  \verb Neumann - 1928 - Zur Theorie der Gesellschaftsspiele.pdf:/Users/joe/sto
  \verb re/zotero/storage/DFCEVC9I/Neumann - 1928 - Zur Theorie der Gesellschaf
  \verb tsspiele.pdf:application/pdf
  \endverb
  \field{journaltitle}{Mathematische Annalen}
  \field{month}{12}
  \field{year}{1928}
  \field{urlday}{05}
  \field{urlmonth}{03}
  \field{urlyear}{2015}
\endentry

\entry{tomamichel_strong_2014}{article}{}
  \name{author}{3}{}{%
    {{}%
     {Tomamichel}{T.}%
     {Marco}{M.}%
     {}{}%
     {}{}}%
    {{}%
     {Wilde}{W.}%
     {Mark~M.}{M.~M.}%
     {}{}%
     {}{}}%
    {{}%
     {Winter}{W.}%
     {Andreas}{A.}%
     {}{}%
     {}{}}%
  }
  \strng{namehash}{TMWMMWA1}
  \strng{fullhash}{TMWMMWA1}
  \field{abstract}{%
  We revisit a fundamental open problem in quantum information theory, namely
  whether it is possible to transmit quantum information at a rate exceeding
  the channel capacity if we allow for a non-vanishing probability of decoding
  error. Here we establish that the Rains information of any quantum channel is
  a strong converse rate for quantum communication: For any code with a rate
  exceeding the Rains information of the channel, we show that the fidelity
  vanishes exponentially fast as the number of channel uses increases. This
  remains true even if we consider codes that perform classical post-processing
  on the transmitted quantum data. Our result has several applications. Most
  importantly, for generalized dephasing channels we show that the Rains
  information is also achievable, and thereby establish the strong converse
  property for quantum communication over such channels. This for the first
  time conclusively settles the strong converse question for a class of quantum
  channels that have a non-trivial quantum capacity. Moreover, we show that the
  classical post-processing assisted quantum capacity of the quantum binary
  erasure channel satisfies the strong converse property.%
  }
  \field{title}{Strong converse rates for quantum communication}
  \verb{url}
  \verb http://arxiv.org/abs/1406.2946
  \endverb
  \verb{file}
  \verb Tomamichel et al_2014_Strong converse rates for quantum communication.p
  \verb df:/Users/joe/store/zotero/storage/C7DDQCXE/Tomamichel et al_2014_Stron
  \verb g converse rates for quantum communication.pdf:application/pdf
  \endverb
  \field{journaltitle}{arXiv:1406.2946 [quant-ph]}
  \field{month}{06}
  \field{year}{2014}
  \field{urlday}{07}
  \field{urlmonth}{02}
  \field{urlyear}{2015}
\endentry

\entry{devetak_private_2005}{article}{}
  \name{author}{1}{}{%
    {{}%
     {Devetak}{D.}%
     {Igor}{I.}%
     {}{}%
     {}{}}%
  }
  \keyw{channel capacity, channel coding, channel coding theorem, coherent
  information, cryptography, key generation, private classical information
  transmission, protocols, public classical communication, public key
  cryptography, pure bipartite entanglement, quantum channel capacity, quantum
  cryptography, quantum entanglement, wiretap channel}
  \strng{namehash}{DI1}
  \strng{fullhash}{DI1}
  \field{abstract}{%
  A formula for the capacity of a quantum channel for transmitting private
  classical information is derived. This is shown to be equal to the capacity
  of the channel for generating a secret key, and neither capacity is enhanced
  by forward public classical communication. Motivated by the work of
  Schumacher and Westmoreland on quantum privacy and quantum coherence,
  parallels between private classical information and quantum information are
  exploited to obtain an expression for the capacity of a quantum channel for
  generating pure bipartite entanglement. The latter implies a new proof of the
  quantum channel coding theorem and a simple proof of the converse. The
  coherent information plays a role in all of the above mentioned capacities.%
  }
  \verb{doi}
  \verb 10.1109/TIT.2004.839515
  \endverb
  \field{issn}{0018-9448}
  \field{number}{1}
  \field{pages}{44\bibrangedash  55}
  \field{title}{The private classical capacity and quantum capacity of a
  quantum channel}
  \field{volume}{51}
  \verb{file}
  \verb Devetak - 2005 - The private classical capacity and quantum capacit.pdf
  \verb :/Users/joe/store/zotero/storage/MM8TN2HU/Devetak - 2005 - The private
  \verb classical capacity and quantum capacit.pdf:application/pdf
  \endverb
  \field{journaltitle}{IEEE Transactions on Information Theory}
  \field{year}{2005}
\endentry

\lossort
\endlossort

%
  \blx@bblend
  \endgroup
  \csnumgdef{blx@labelnumber@\the\c@refsection}{0}}
\newcommand{\proj}[1]{|#1\rangle\langle #1|}
\def\tr{{\rm Tr}}
\def\eps{\varepsilon}
\renewcommand{\phi}{\varphi}
\def\id{\mathbbm{1}}
\def\jam{Jamio\l{}kowski }
\def\cB{\mathcal{B}}
\def\cC{\mathcal{C}}
\def\cD{\mathcal{D}}
\def\cE{\mathcal{E}}
\def\cG{\mathcal{G}}
\def\cH{\mathcal{H}}
\def\cI{\mathcal{I}}
\def\cK{\mathcal{K}}
\def\cM{\mathcal{M}}
\def\cN{\mathcal{N}}
\def\cR{\mathcal{R}}
\def\cS{\mathcal{S}}
\newtheorem{theorem}{Theorem}
\newtheorem{lemma}{Lemma}
\newtheorem{proposition}{Proposition}
\newtheorem{corollary}{Corollary}
\def\defeq{\coloneqq}
\begin{document}

\title{A Minimax Converse for Quantum Channel Coding}
\author{Joseph M. Renes\\
{\small Institute for Theoretical Physics, ETH Z\"urich, 8093 Z\"urich, Switzerland}
}

\maketitle

\begin{abstract}
We prove a one-shot ``minimax'' converse bound for quantum channel coding assisted by positive partial transpose channels between sender and receiver.
The bound is similar in spirit to the converse by Polyanskiy, Poor, and Verd\'u [\href{http://dx.doi.org/10.1109/TIT.2010.2043769}{IEEE Trans.\ Info.\ Theory {\bf 56}, 2307–2359 (2010)}] for classical channel coding, and also enjoys the saddle point property enabling the order of optimizations to be interchanged. 
Equivalently, the bound can be formulated as a semidefinite program satisfying strong duality. 
The convex nature of the bound implies channel symmetries can substantially simplify the optimization, enabling us to explicitly compute the finite blocklength behavior for several simple qubit channels. 
In particular, we find that finite blocklength converse statements for the classical erasure channel apply to the assisted quantum erasure channel, while bounds for the classical binary symmetric channel apply to both the assisted dephasing and depolarizing channels.
This implies that these qubit channels inherit statements regarding the asymptotic limit of large blocklength, such as the strong converse or second-order converse rates, from their classical counterparts. Moreover, for the dephasing channel, the finite blocklength bounds are as tight as those for the classical binary symmetric channel, since coding for classical phase errors yields equivalently-performing unassisted quantum codes.  

This paper has been merged with~\cite{tomamichel_quantum_2015v1}; see~\cite{tomamichel_quantum_2015}.
\end{abstract}

\section{Introduction}
The capacity of a noisy channel is the ultimate, in-principle limit on its capability for reliable communication, and therefore studying channel capacity is an important goal in information theory. 
By its nature, the capacity is not of \emph{immediate} practical concern, as it ignores the resource requirements that would be needed to achieve the limit. 
Approaching capacity might, in principle, require coding operations and blocklengths too cumbersome or large to be implementable.
Nevertheless, several classical coding techniques developed in recent years have narrowed the gap between in-principle and in-practice for classical communication over classical channel, in particular polar codes~\cite{arikan_channel_2009} and spatially-coupled low-density parity-check codes~\cite{kudekar_spatially_2012}. 
Coding and decoding operations can be performed efficiently (quasilinearly) in the blocklengths of these codes, though the blocklengths themselves must still be rather large to approach capacity. 
The situation is dramatically different for quantum coding, where accurate control of quantum systems is a major experimental challenge, while manipulation and storage of classical bits is obscenely easy by comparison. 

Thus, it is of interest to better understand the possible performance of codes operating with fixed resources, in particular at finite blocklength. 
A bound which limits the performance of a coding scheme given fixed resources is known as a converse bound. 
For classical channels, the first truly systematic results on converse bounds limiting the size (blocklength) of codes with a given error probability were given by Polyanskiy, Poor, and Verd\'u~\cite{polyanskiy_channel_2010}. 
They formulated the converse bound in terms of a minimax optimization, and showed by numerical examples that the bound is quite tight for several channels of interest even at small blocklengths, by comparing to existing and novel achievability bounds. 
Subsequently, Matthews \cite{matthews_linear_2012} and Polyanskiy~\cite{polyanskiy_saddle_2013} demonstrated concavity and convexity properties of the bound which enable it to be formulated as a linear program (Matthews) or equivalently that the order of minimization and maximzation can be interchanged (Polyanskiy). 
Their results imply that channel symmetries can be used to simplify the optimization. 

For quantum channels, Matthews and Wehner extended the minimax approach to the task of transmitting classical information over quantum channels and formulated a bound in terms of a semidefinite program~\cite{matthews_finite_2014}. 
Recently, Leung and Matthews gave a semidefinite program for the optimal achievable fidelity for transmitting quantum information by codes of a fixed size~\cite{leung_power_2014}. 
In both of these cases the codes under consideration include the possibility of assistance by forward or reverse communication between sender and receiver, as this makes it possible to formulate the bounds as convex optimizations. 

In this paper we give a minimax bound for the size of codes for transmitting quantum information in terms of their entanglement fidelity. 
This is, in some sense, the opposite optimization as in~\cite{leung_power_2014}, and follows the original approach of~\cite{polyanskiy_channel_2010}. 
Here, too, assisting communication between sender and receiver is used to ensure the bound is tractable, and we show that the bound can be formulated as a semidefinite program. 
The advantage of this approach is that, as we show, simple qubit channels inherit converse bounds from simple classical channels, enabling us to directly apply results from~\cite{polyanskiy_channel_2010} and \cite{polyanskiy_saddle_2013} for classical channels to quantum problems. 

The remainder of the paper is organized as follows. 
The following section establishes the mathematical framework and notation used throughout. 
Then the precise details of the coding scenario under consideration and the minimax bound, Theorem~\ref{thm:minimaxbound}, are presented in Sec.~\ref{sec:converse}. Sec.~\ref{sec:sdp} shows that the minimax bound can be expressed as a semidefinite program. 
Next, Sec.~\ref{sec:symmetry} is a somewhat elaborate discussion of how symmetry can be used to simply the bound, which ultimately rests on its concavity and convexity properties. 
The particular qubit channel examples are detailed in Sec.~\ref{sec:examples}, and the paper finishes with a discussion of related bounds. 




\section{Mathematical Setup}

\subsection{States and Channels}

In this paper we consider finite-dimensional quantum systems, labelled by capital letters $A$, $B$, and so forth. 
The state space of system $A$ is denoted by $\cH_A$, and the dimension of this space $|A|$. 
The set of bounded operators on $\cH_A$ is denoted by $\cB(A)$, while the set of bounded operators with unit trace, i.e.\ the states of $A$, is denoted by $\cS(A)$.
The maximally mixed state on $A$ is denoted by $\pi_A=\id_A/|A|$. 
For two systems $A$ and $A'$ of the same dimension, i.e.\ with isomorphic state spaces, we may choose a basis $\{\ket{k}\}_k$ in each to define the unnormalized vector $\ket{\Omega}_{AA'}=\sum_{k}\ket{k}_{A}\otimes \ket k_{A'}$. 
The canonical maximally entangled state is then just $\ket{\Phi}_{AA'}=\frac 1{\sqrt{|A|}}\ket{\Omega}_{AA'}$.

A channel $\cN_{B|A}$ is a linear map from $\cB(A)$ to $\cB(B)$ which is both trace preserving and completely positive. 
Adjoints of channels and operators are denoted by $*$.
The Choi representative (or Choi operator) of the channel is the bipartite operator $N_{B|A}=(\cI_{A|A'}\otimes \cN_{B|A})[\Omega_{AA'}]$. 
The Choi representative of a channel satisfies $N_{B|A}\geq 0$ and $\tr_B[N_{B|A}]=\id_A$, and any bipartite operator $N_{B|A}$ satisfying these conditions defines a valid channel via $\rho_A\mapsto \tr_A[N_{B|A}\rho_A^T]$, a statement known as the Choi isomorphism~\cite{choi_completely_1975}. 
Note that trace non-increasing completely positive maps have Choi operators with $\tr_B[N_{B|A}]\leq \id_A$. 

In contrast, the \jam representative of $\cN_{B|A}$ is the operator $\hat N_{B|A}= N_{B|A}^{T_A}$, where $T_A$ denotes the transpose of system $A$.
Put differently, $\hat N_{B|A}=(\cI_{A|A'}\otimes \cN_{B|A})[\Omega_{AA'}^{T_A}]$~\cite{jamiolkowski_linear_1972}. 
An appealing property of the \jam representative is that the action of the channel no longer involves the transpose and is just $\rho_A\mapsto \tr_A[\hat N_{B|A}\rho_A]$. 
This formulation makes channel action look quite similar to marginalizing random variables in a classical probability distribution; for more on this point, see~\cite{leifer_towards_2013}. 
Further, the \jam isomorphism is natural, in the sense that it does not require a choice of basis, since $\Omega_{AA'}^{T_A}=\Pi_{AA'}^{\text{sym}}-\Pi_{AA'}^{\text{antisym}}$, and the projectors onto the symmetric and antisymmetric spaces of $\cH_A\otimes \cH_{A'}$ are basis-independent. 
However, we shall not make particular use of this fact here.  

Bipartite states $\rho_{AB}$ for which $\rho_{AB}^{T_B}\geq 0$ (equivalently, $\rho_{AB}^{T_A}\geq 0$) will be called positive partial transpose states, or PPT for short. 
A channel is PPT-preserving if a PPT input necessarily results in a PPT output. 
In~\cite{rains_semidefinite_2001}, it is shown that PPT-preserving channels have PPT Choi states (see the discussion after Eq.\ 4.13). 
More directly, suppose that $\phi_{AB}$ is a state with $\phi^{T_A}_{AB}\geq 0$ and $\cD_{MM'|AB}$ is a channel with $D_{MM'|AB}^{T_{MA}}\geq 0$. Let $\sigma_{MM'}=\cD_{MM'|AB}(\phi_{AB})$ and compute $\sigma_{MM'}^{T_M}$: 
\begin{align}
\sigma_{MM'}^{T_M} 
&=\tr_{AB}[D_{MM'|AB}^{T_M}\phi_{AB}^T]\\
&=\tr_{AB}[D_{MM'|AB}^{T_{MA}}\phi_{AB}^{T_A}].
\end{align}
Although the trace of positive operators is a positive number, the partial trace need not be a positive operator (consider $\tr_{B}[\Phi_{AB}\Phi_{BC}]$), so we cannot conclude that 
$\sigma_{MM'}^{T_M}\geq 0$ on this basis alone. However, the fact that the $\phi_{AB}^{T_A}$ factor is completely traced out, along with positivity of it and $D_{MM'|AB}^{T_{MA}}$, together imply that all expectation values of $\sigma_{MM'}^{T_M}$ are positive. 

\subsection{Semidefinite programming}
A semidefinite program (SDP) is simply an optimization of a linear function of a matrix or operator over a feasible set of inputs defined by positive semidefinite constraints. 
We give only the bare essentials here, for more detail see \cite{boyd_convex_2004,watrous_quantum_2011}.

The maximization form of an SDP is defined by a Hermiticity-preserving superoperator $\cE_{B|A}$ taking $\cB(A)$ to $\cB(B)$, a constraint operator $C\in \cB(B)$, and an operator $K\in \cB(A)$ which defines the objective function.  The SDP is the following optimization, which we will also refer to as the primal form,
\begin{align}
\begin{array}{r@{\,\,}rl}
\alpha=&\text{supremum} & \tr[KX]\\
&\text{subject to} & \cE(X)\leq C,\\
&& X\geq 0.
\end{array}
\end{align}
When the feasible set is empty, i.e.\ no $X$ satisfy the constraints, we set  $\alpha=-\infty$. 

The dual form arises as the optimal upper bound to the primal form, and takes the form 
\begin{align}
\begin{array}{r@{\,\,}ll}
\beta=&\text{infimum} & \tr[CY]\\
&\text{subject to} & \cE^*(Y)\geq K,\\
&& Y\geq 0.
\end{array}
\end{align}
Again, when the set of feasible $Y$ is empty, $\beta=\infty$. Weak duality is the statement that $\alpha\leq \beta$, that indeed the dual form gives upper bounds to the primal (or that the primal lower bounds the dual). 

Strong duality is the statement that the optimal upper bound equals the value of the primal problem, $\alpha=\beta$. This state of affairs often holds in problems of interest, and can be established by either of the following Slater conditions. In the first, called strict primal feasibility, strong duality holds if $\beta$ is finite and there exists an $X>0$ such that $\cE(X)<C$. Contrariwise, under strict dual feasibility strong duality holds when $\alpha$ is finite and there exists a $Y>0$ such that $\cE^*(Y)>K$. 
For strongly dual SDPs we also have the so-called complementary slackness conditions $\cE^*(Y)X=KX$ and $\cE(X)Y=CY$ that relate the primal and dual optimizers.

Neyman-Pearson hypothesis testing is an SDP particularly useful in the study of information processing protocols. Given one of two states $\rho$ or $\sigma$, called the null and alternate hypothesis, an asymmetric hypothesis test is a two-outcome POVM $\{\Gamma,\id-\Gamma\}$ that indicates which of the two hypotheses (states) is actually present. Here $\Gamma$ indicates $\rho$. Any such test makes two kinds of errors, the type-I error in which the null hypothesis is erroneously rejected, and the type-II error in which the alternate hypothesis is erroneously rejected. The probabilities of type-I and type-II errors are just $\tr[\rho(\id-\Gamma)]$ and $\tr[\sigma\Gamma]$, respectively. Fixing the probability of type-I error to $1-\eps$, we may ask for the test with the optimal (minimal) type-II error, called the Neyman-Pearson test. 
The optimal POVM is specified by $\Gamma$ with the smallest value of $\tr[\sigma\Gamma]$ such that $0\leq \Gamma\leq \id$ and $\tr[\rho\Gamma]\geq \eps$. 
This is a dual-form SDP with $Y=\Gamma$, $C=\sigma$, $K=(-\id,\eps)$ and $\cE^*(Y)=(-Y,\tr[\rho Y])$, and it satisfies strong duality. 
For future use, let us denote this optimal value by $\beta_\eps(\rho,\sigma)$:
\begin{align}
\begin{array}{r@{\,\,}ll}
\beta_\eps(\rho,\sigma)\defeq &\min & \tr[\Gamma \sigma]\\
&\text{subject to} & \tr[\Gamma\rho]\geq \eps\\
&& 0\leq \Gamma\leq \id.
\end{array}
\end{align}
The slackness conditions can be used to infer the form of the optimal test, recovering the  Neyman-Pearson lemma of classical statistics.

\section{Converse bound}
\label{sec:converse}

\subsection{Coding scenario}
\label{sec:codedef}
In this section we define the coding scenario precisely. 
Here we consider coding schemes for using a noisy quantum channel $\cN_{B|A}$ together with some auxiliary assistance channels to create a high-fidelity entangled state between sender Alice and receiver Bob. 
In particular, a PPT-assisted code denoted by $\cC=(\phi_{AA'B'},\cD_{MM'|BA'B'})$ consists of a PPT state $\phi_{AA'B'}$ with Alice in possession of systems $AA'$ and Bob $B'$ as well as a PPT-preserving decoding operation $\cD_{MM'|BA'B'}$ which outputs system $M$ to Alice and $M'\simeq M$ to Bob. By a slight abuse of notation, we also denote the dimension of system $M$ by $M$. Note that the decoding operation might consist of several rounds of forward and backward communication between sender and receiver. 

The coding scheme proceeds by Alice and Bob creating the state $\phi_{AA'B'}$ by means of a PPT channel, and Alice subsequently transmitting the $A$ system to Bob via $\cN_{B|A}$. Finally, they perform the operation $\cD_{MM'|BA'B'}$, leaving them with a bipartite state in $MM'$. 
The resulting entanglement fidelity is defined by  
\begin{align}
\label{eq:fidelitydef}
F(\cN;\cC)&\defeq\tr[\Phi_{MM'}\cD_{MM'|A'BB'}\circ\cN_{B|A}(\phi_{AA'B'})].
\end{align}
Finally, an $(M,\eps)$ code for $\cN_{B|A}$ is a pair $\cC=(\cE_{AA'B'},\cD_{MM'|BA'B'})$ with output dimension $M$ such that $F(\cN;\cC)\geq 1-\eps$.

As a side remark, when the initial state $\phi_{AA'B'}$ is separable and not only PPT, the coding scheme can be simplified without loss of generality. In particular, one can dispense with system $B'$ and restrict to pure states $\phi_{AA'}$. To see this, note that the fidelity is a convex combination of entanglement fidelities for the constitutent states $\sigma_{AA'}^j\otimes \eta_j^{B'}$ in $\phi_{AA'B'}=\sum_j \sigma_{AA'}^j\otimes \eta^j_{B'}$, so it will certainly not decrease when shifting to the code $\cC_j=(\sigma_{AA'}^j\otimes \eta^j_{B'},\cD_{MM'|BA'B'})$ with the largest entangelement fidelity. Further, as $\eta_B^j$ is fixed, it can be absorbed into the decoding operation. Finally, the same argument implies that one can further modify $\sigma_{AA'}^j$ to be a pure state. However, this argument does not necessarily go through in the general case when $\phi_{AA'B'}$ is PPT. 

\subsection{PPT-preserving channels}
Instead of the actual channel $\cN_{B|A}$, consider an arbitrary PPT-preserving channel $\cM_{B|A}$. This  produces a PPT state $\gamma_{MM'}=\cD_{MM'|A'BB'}\circ\cM_{B|A}(\phi_{AA'B'})$. The following argument, due to Rains, then immediately implies 
\begin{align}
\label{eq:pptfidelitybound}
F(\cM;\cC)\leq \frac1M.
\end{align} 
\begin{lemma}[Rains~\cite{rains_bound_1999}]
\label{lem:rainsppt}
For any subnormalized state $\gamma_{MM'}$ with $\gamma_{MM'}^{T_M}\geq 0$, $\tr[\Phi_{MM'}\gamma_{MM'}]\leq \frac 1M$. 
\end{lemma}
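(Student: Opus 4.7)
The plan is to use the partial transpose invariance of the trace inner product, $\tr[XY]=\tr[X^{T_M}Y^{T_M}]$, to move both factors into the PPT picture where the hypothesis $\gamma_{MM'}^{T_M}\geq 0$ can be exploited directly. Writing
\begin{align}
\tr[\Phi_{MM'}\gamma_{MM'}]=\tr[\Phi_{MM'}^{T_M}\gamma_{MM'}^{T_M}],
\end{align}
the question reduces to bounding $\Phi_{MM'}^{T_M}$ from above in the operator ordering.

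The key computation is that the partial transpose of the maximally entangled projector equals $\frac{1}{M}$ times the swap operator $F_{MM'}=\sum_{jk}|j\rangle\langle k|_M\otimes |k\rangle\langle j|_{M'}$. This is immediate from $\Phi_{MM'}=\frac{1}{M}\sum_{jk}|j\rangle\langle k|_M\otimes |j\rangle\langle k|_{M'}$ by transposing the $M$ factor. Since $F_{MM'}$ is Hermitian with eigenvalues $\pm 1$ (being $+1$ on the symmetric subspace and $-1$ on the antisymmetric one), one has $F_{MM'}\leq \id_{MM'}$, and therefore
\begin{align}
\Phi_{MM'}^{T_M}=\tfrac{1}{M}F_{MM'}\leq \tfrac{1}{M}\id_{MM'}.
\end{align}

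Now I combine the two facts. Because $\gamma_{MM'}^{T_M}\geq 0$ by hypothesis, and $\tfrac{1}{M}\id_{MM'}-\Phi_{MM'}^{T_M}\geq 0$ by the previous step, the trace of their product is nonnegative, giving
\begin{align}
\tr[\Phi_{MM'}\gamma_{MM'}]=\tr[\Phi_{MM'}^{T_M}\gamma_{MM'}^{T_M}]\leq \tfrac{1}{M}\tr[\gamma_{MM'}^{T_M}]=\tfrac{1}{M}\tr[\gamma_{MM'}]\leq \tfrac{1}{M},
\end{align}
where the final inequality uses subnormalization of $\gamma_{MM'}$ and the fact that partial transpose preserves the trace. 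No step is really a serious obstacle; the only thing to be careful about is making sure the partial transpose identity for the inner product is applied correctly and that the swap really does satisfy $F\leq \id$, both of which are standard.
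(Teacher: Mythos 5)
Your proof is correct and follows essentially the same argument as the paper: move to the partial-transposed picture via $\tr[\Phi\gamma]=\tr[\Phi^{T_M}\gamma^{T_M}]$, use $\Phi_{MM'}^{T_M}=\tfrac{1}{M}F_{MM'}\leq\tfrac{1}{M}\id_{MM'}$, and finish with subnormalization of $\gamma$. You simply spell out a couple of intermediate facts (the $\pm 1$ spectrum of the swap and why the trace of a product of positive operators is nonnegative) that the paper leaves implicit.
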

\begin{proof}
The proof proceeds by observing that the partial transpose of the maximally entangled state is the swap operator, normalized by $M$. 
\begin{align}
\tr[\Phi_{MM'}\gamma_{MM'}]&=\tr[\Phi_{MM'}^{T_M}\gamma_{MM'}^{T_M}]\\
&\leq ||\Phi_{MM'}^{T_M}||_\infty\tr[\gamma_{MM'}^{T_M}]\\
&\leq \frac1M.
\end{align}
The first inequality follows from the fact that  $\Phi_{MM'}^{T_M}=\frac1MU_{MM'}^{\textsc{swap}}$ and therefore $\Phi_{MM'}^{T_M}\leq ||\Phi_{MM'}^{T_M}||_\infty\id_{MM'}=\frac1M\id_{MM'}$. The second inequality is the subnormalization of $\gamma_{MM'}$. 
\end{proof}

\subsection{Fidelity in terms of the Choi operator}

Now we show that the entanglement fidelity \eqref{eq:fidelitydef} can be expressed directly in terms of its Choi operator of the channel and without explicit reference to the systems $MM'A'B'$. To do so, it is actually convenient to start with the \jam representation of channel action. 
In this representation we can write the fidelity as 
\begin{align}
F(\cN;\cC)&=\tr_{MM'}[\Phi_{MM'}\tr_{A'BB'}[\hat D_{MM'|A'BB'}\cN_{B|A}(\phi_{AA'B'})]]\\
&=\tr_{MM'}[\Phi_{MM'}\tr_{A'BB'}[\hat D_{MM'|A'BB'}\tr_{A}[\hat N_{B|A}\phi_{AA'B'}]]]\\
&=\tr_{MM'ABA'B'}[\Phi_{MM'}\hat D_{MM'|A'BB'}\hat N_{B|A}\phi_{AA'B'}]\\
&=\tr_{MM'ABA'B'}[\Phi_{MM'}\hat D_{MM'|A'BB'}\phi_{AA'B'}^{T_A}\hat N_{B|A}^{T_A}],
\end{align}
where, in the last equation, the order of $\phi$ and $N$ is interchanged in accordance with the usual rules of transposition. 

Define the operator
\begin{align}
\label{eq:codeLambda}
\Lambda_{AB}=\tr_{MM'A'B'}[\Phi_{MM'}\hat D_{MM'|A'BB'}\phi_{AA'B'}^{T_A}].
\end{align}
Since $N_{B|A}=\hat N_{B|A}^{T_A}$, the fidelity can be expressed as a linear function of the Choi operator of the channel,
\begin{align}
\label{eq:fidelitychoi}
F(\cN;\cC)=\tr[\Lambda_{AB} N_{B|A}],
\end{align}
for the particular $\Lambda_{AB}$ defined by the code. An $(M,\eps)$ code will have a $\Lambda_{AB}$ which satisfies 
\begin{align}
\label{eq:fidelitybound}
\tr[\Lambda_{AB} N_{B|A}]\geq 1-\eps.
\end{align}

Any such operator $\Lambda_{AB}$ satisfies the following two simple properties 
\begin{proposition}
\label{prop:Lambda}
Any operator $\Lambda_{AB}$ defined as in \eqref{eq:codeLambda} satisfies
\begin{align}
&0\leq \Lambda_{AB}\leq \phi_A^T\otimes \id_B,\label{eq:lambdabounds1}
\end{align}
\end{proposition}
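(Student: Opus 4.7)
The plan is to prove both inequalities by exploiting two observations: first, that the identity~\eqref{eq:fidelitychoi} actually holds for any \emph{completely positive} map (not just CPTP ones), giving a convenient linear functional with which to test positivity; second, that $\Phi_{MM'}$ and $\id_{MM'}-\Phi_{MM'}$ are both positive operators. Crucially, neither bound will require the PPT assumptions on $\phi_{AA'B'}$ and $\cD_{MM'|BA'B'}$; those only enter later via Rains' lemma.

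For the lower bound $\Lambda_{AB}\ge 0$, I would fix an arbitrary $N_{B|A}\ge 0$, view it as the Choi operator of some CP (but possibly not trace-preserving) map $\tilde\cN_{B|A}$, and rerun the derivation leading to \eqref{eq:fidelitychoi} to obtain
\begin{align}
\tr[\Lambda_{AB}N_{B|A}]=\tr_{MM'}\bigl[\Phi_{MM'}\,\cD_{MM'|A'BB'}\!\circ\!\tilde\cN_{B|A}(\phi_{AA'B'})\bigr].
\end{align}
Since $\phi_{AA'B'}\ge 0$ and both $\tilde\cN$ and $\cD$ are completely positive, the operator inside the trace is a product of positive operators, and in particular has non-negative trace. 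As $\tr[\Lambda_{AB}N]\ge 0$ for every $N\ge 0$ is equivalent to $\Lambda_{AB}\ge 0$, the first bound follows.

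For the upper bound, I would split $\id_{MM'}=\Phi_{MM'}+(\id_{MM'}-\Phi_{MM'})$ inside the definition~\eqref{eq:codeLambda} and compute the $\id_{MM'}$ part directly. Using that $\cD$ is trace-preserving gives $\tr_{MM'}\hat D_{MM'|A'BB'}=\id_{A'BB'}$, so
\begin{align}
\tr_{MM'A'B'}[\id_{MM'}\,\hat D_{MM'|A'BB'}\,\phi^{T_A}_{AA'B'}]
=\id_B\otimes\tr_{A'B'}\bigl[\phi^{T_A}_{AA'B'}\bigr]
=\phi_A^T\otimes\id_B.
\end{align}
Hence $\phi_A^T\otimes\id_B-\Lambda_{AB}=\tr_{MM'A'B'}[(\id_{MM'}-\Phi_{MM'})\hat D_{MM'|A'BB'}\phi^{T_A}_{AA'B'}]$, which is of exactly the same form as $\Lambda_{AB}$ itself but with the positive operator $\id_{MM'}-\Phi_{MM'}\ge 0$ replacing $\Phi_{MM'}$. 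Applying the argument of the first step verbatim shows this difference is $\ge 0$, completing the proof.

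The only subtle point—and the main thing to get right—is the extension of \eqref{eq:fidelitychoi} to arbitrary CP maps, which is what makes the CP-positivity of $\cD(\tilde\cN(\phi))$ usable as a certificate. Everything else is bookkeeping with partial traces and the trace-preservation identity $\tr_{MM'}\hat D=\id_{A'BB'}$.
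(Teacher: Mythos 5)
Your proof is correct and rests on the same mechanism as the paper's: both arguments exploit that $\Lambda_{AB}$ is the image of $\Phi_{MM'}$ under a completely positive map which carries $\id_{MM'}$ to $\phi_A^T\otimes\id_B$, so $0\leq\Phi_{MM'}\leq\id_{MM'}$ delivers both inequalities. The paper makes this explicit in one line by writing $\Lambda_{AB}=(\cR_{A'B'|A}^*\otimes\cI_B)\circ\cD_{MM'|BA'B'}^*(\Phi_{MM'})$ with $\cR$ the CP, trace-non-increasing map whose \jam operator is $\phi_{AA'B'}^{T_A}$, and reads off the bounds directly; you instead certify positivity of $\Lambda_{AB}$ and of $\phi_A^T\otimes\id_B-\Lambda_{AB}$ via the dual test $\tr[\,\cdot\,N_{B|A}]\geq 0$ for all $N_{B|A}\geq 0$, using the extension of \eqref{eq:fidelitychoi} to arbitrary CP maps — which is precisely the linearity fact the paper's adjoint representation also rests on.
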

\begin{proof}
First regard $\phi^{T_A}_{AA'B'}$ as the \jam representative $\hat R_{A'B'|A}=\phi_{AA'B'}^{T_A}$ of a channel $\cR_{A'B'|A}$. Then 
\begin{align}
\Lambda_{AB}=\cR_{A'B'|A}^*\circ\cD_{MM'|BA'B'}^*(\Phi_{MM'}).
\end{align}
Now observe that $\cR_{A'B'|A}$ is completely positive and trace descreasing:
\begin{align}
&\hat R_{A'B'|A}^{T_A}=\phi_{AA'B'}\geq 0,\\
&\tr_{A'B'}[\hat R_{A'B'|A}]=\phi_A^T\leq \id_A,
\end{align}
meaning its adjoint action is 
completely positive and subunital. 
Therefore, \eqref{eq:lambdabounds1} follows from the fact that $0\leq \Phi_{MM'}\leq \id_{MM'}$. 

\end{proof}
These constraints hold for more than just PPT-assisted codes, as is easily demonstrated. Let $\cD_{MM'|BA'B'}$ be the channel which creates $\Phi_{MM'}$ and ignores (traces out) the input systems $BA'B'$. Then $\hat D_{MM'|BA'B'}=\Phi_{MM'}\id_{BA'B'}$. This leads to $\Lambda_{AB}=\phi_A^T\id_B$, which satisfies both constraints.

\subsection{Minimax converse bound}
It is now straightforward to derive the minimax converse bound. 
The idea is simple: Maximizing over PPT channels $\cM_{B|A}$ in the fidelity expression \eqref{eq:fidelitychoi} and using \eqref{eq:pptfidelitybound} gives a lower bound on $\frac 1M$, i.e.\ an upper bound on $M$. 
However, the result depends on the details of the code via $\Lambda_{AB}$. 
This dependence can be removed by minimizing $\Lambda_{AB}$ over the smallest conveniently-described set which certainly contains the $\Lambda_{AB}$ associated with the code. 
Here, this is the set defined by the constraints in \eqref{eq:fidelitybound} and \eqref{eq:lambdabounds1}, for arbitrary subnormalized $\phi_A$, as these do not depend on the precise details of the coding operations. 

To state the minimax bound formally, first define the following sets 
\begin{align}
\textsc{ppt}&\defeq\{M_{B|A}: M_{B|A}\geq 0, M_{B|A}^{T_A}\geq 0, \tr_B[ M_{B|A}]\leq \id_A\}\label{eq:pptset}\\
\textsc{f}(\cN,\eps)&\defeq\{(\phi_A,\Lambda_{AB}):\phi_A\geq 0,\tr[\phi_A]\leq 1, 0\leq \Lambda_{AB}\leq \phi_A^T\id_B,\tr[\Lambda_{AB}N_{B|A}]\geq 1-\eps\}.\label{eq:feasibleset}
\end{align}
Then we have 
\begin{theorem}
\label{thm:minimaxbound}
Any $(M,\eps)$ PPT-assisted code satisfies
\begin{align}
\label{eq:minimaxbound}
\min_{(\phi_A,\Lambda_{AB})\in \textsc{f}(\cN,\eps)} \max_{M_{B|A}\in \textsc{ppt}}\tr[\Lambda_{AB} M_{B|A}]\leq \frac1M.
\end{align}
\end{theorem}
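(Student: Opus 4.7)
The strategy is a direct ``plug-in'' argument: every $(M,\eps)$ PPT-assisted code gives rise to a distinguished pair $(\phi_A,\Lambda_{AB})\in \textsc{f}(\cN,\eps)$, and for that fixed pair the inner maximization is bounded by $1/M$. Since the outer minimization is only smaller than the value at that pair, the claim follows.

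\textbf{Step 1: produce a feasible pair from the code.} Fix an arbitrary $(M,\eps)$ PPT-assisted code $\cC=(\phi_{AA'B'},\cD_{MM'|BA'B'})$ and define $\Lambda_{AB}$ exactly as in \eqref{eq:codeLambda}. By \eqref{eq:fidelitychoi} we have $F(\cN;\cC)=\tr[\Lambda_{AB}N_{B|A}]$, and by definition of an $(M,\eps)$ code this is at least $1-\eps$. Proposition~\ref{prop:Lambda} yields $0\leq \Lambda_{AB}\leq \phi_A^T\otimes \id_B$, and $\phi_A$ is automatically a subnormalized state since $\phi_{AA'B'}$ was. Hence $(\phi_A,\Lambda_{AB})\in \textsc{f}(\cN,\eps)$.

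\textbf{Step 2: bound the inner maximization for this pair.} Let $M_{B|A}\in \textsc{ppt}$ be arbitrary. Because $M_{B|A}\geq 0$, $M_{B|A}^{T_A}\geq 0$, and $\tr_B[M_{B|A}]\leq \id_A$, it is the Choi operator of some trace non-increasing PPT-preserving completely positive map $\cM_{B|A}$. Running the coding scheme with $\cM_{B|A}$ in place of $\cN_{B|A}$ produces the subnormalized bipartite state $\gamma_{MM'}=\cD_{MM'|BA'B'}\circ\cM_{B|A}(\phi_{AA'B'})$. The computation in Sec.~2.1 shows that composing a PPT input state with a PPT-preserving channel yields a PPT output, and applying the same observation once more to the PPT-preserving decoder gives $\gamma_{MM'}^{T_M}\geq 0$. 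Rains' Lemma~\ref{lem:rainsppt} therefore implies $\tr[\Phi_{MM'}\gamma_{MM'}]\leq 1/M$. Since \eqref{eq:fidelitychoi} depends linearly on the channel's Choi operator and was derived without using any specific property of $\cN_{B|A}$ beyond linearity, the same identity gives $\tr[\Phi_{MM'}\gamma_{MM'}]=\tr[\Lambda_{AB}M_{B|A}]$. Thus $\tr[\Lambda_{AB}M_{B|A}]\leq 1/M$ for every $M_{B|A}\in \textsc{ppt}$, so the inner maximum is at most $1/M$.

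\textbf{Step 3: combine.} Since $(\phi_A,\Lambda_{AB})$ is a particular feasible point, the minimum over $\textsc{f}(\cN,\eps)$ of the same quantity is no larger, giving \eqref{eq:minimaxbound}.

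\textbf{Anticipated obstacle.} The only subtle point is the claim that running the code with a generic $M_{B|A}\in \textsc{ppt}$ still produces a PPT bipartite output, so that Rains' lemma applies. This needs the ``PPT in, PPT out'' property to be used twice in sequence (for $\cM_{B|A}$ and then $\cD$), and relies on the discussion in Sec.~2 that the relevant partial-transpose positivity is preserved even when only a subsystem is traced out. Everything else is bookkeeping: verifying the four inequalities defining $\textsc{f}(\cN,\eps)$ and identifying $\tr[\Lambda_{AB}M_{B|A}]$ with the entanglement fidelity of the substituted protocol via \eqref{eq:fidelitychoi}.
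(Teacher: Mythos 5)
Your proof is correct and takes essentially the same route as the paper: identify the $\Lambda_{AB}$ induced by the code as a feasible point of $\textsc{f}(\cN,\eps)$ (via Proposition~\ref{prop:Lambda} and Eq.~\eqref{eq:fidelitychoi}), bound $\tr[\Lambda_{AB}M_{B|A}]\leq 1/M$ for every $M_{B|A}\in\textsc{ppt}$ by interpreting $M_{B|A}$ as the Choi operator of a substituted PPT map and invoking Rains' Lemma~\ref{lem:rainsppt} on the resulting PPT output, and then pass to the outer minimum. The only thing the paper does that you omit is a brief argument that the supremum and infimum are actually attained (continuity of the pointwise-max function $f_0$ together with compactness and convexity of $\textsc{ppt}$ and $\textsc{f}(\cN,\eps)$), which justifies writing $\min$ and $\max$ rather than $\inf$ and $\sup$ in~\eqref{eq:minimaxbound}; this is a minor technicality but worth noting since the theorem statement uses the attained forms.
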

For later convenience, let us define 
\begin{align}
f(\cN,\eps)\defeq\min_{(\phi_A,\Lambda_{AB})\in \textsc{f}(\cN,\eps)} \max_{M_{B|A}\in \textsc{ppt}}\tr[\Lambda_{AB} M_{B|A}].
\end{align}
Before proceeding with the proof, observe that we can interchange the order of optimization in $f$, due to von Neumann's minimax theorem~\cite{neumann_zur_1928}, as the objective function is linear and both $\textsc{f}(\cN,\eps)$ and $\textsc{ppt}$ are compact, convex sets. This gives 
\begin{corollary}
Any $(M,\eps)$ PPT-assisted code satisfies
\begin{align}
\label{eq:maximinbound}
 \max_{M_{B|A}\in \textsc{ppt}}\min_{(\phi_A,\Lambda_{AB})\in \textsc{f}(\cN,\eps)}\tr[\Lambda_{AB} M_{B|A}]\leq \frac1M.
\end{align}
\end{corollary}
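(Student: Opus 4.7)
The plan is to obtain the Corollary as an immediate consequence of Theorem~\ref{thm:minimaxbound} combined with von Neumann's minimax theorem~\cite{neumann_zur_1928}. Since Theorem~\ref{thm:minimaxbound} already establishes that $f(\cN,\eps)\leq 1/M$, it suffices to show that the objective value is unchanged under interchange of the order of optimization, i.e.\ that
\begin{align*}
\min_{(\phi_A,\Lambda_{AB})\in \textsc{f}(\cN,\eps)} \max_{M_{B|A}\in \textsc{ppt}}\tr[\Lambda_{AB} M_{B|A}]
=\max_{M_{B|A}\in \textsc{ppt}}\min_{(\phi_A,\Lambda_{AB})\in \textsc{f}(\cN,\eps)}\tr[\Lambda_{AB} M_{B|A}].
\end{align*}

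First I would verify the structural hypotheses required by the minimax theorem. The objective $(\phi_A,\Lambda_{AB}, M_{B|A})\mapsto \tr[\Lambda_{AB} M_{B|A}]$ is bilinear, hence linear (in particular concave and convex) in each argument separately when the other is held fixed. Next I would check that the set $\textsc{ppt}$ defined in~\eqref{eq:pptset} is convex (intersection of a linear subspace with two positive semidefinite cones and a trace inequality), closed (each defining condition is a closed condition on Hermitian operators), and bounded (since $\tr_B[M_{B|A}]\leq \id_A$ together with $M_{B|A}\geq 0$ bounds $\tr[M_{B|A}]\leq |A|$), hence compact. Similarly, $\textsc{f}(\cN,\eps)$ from~\eqref{eq:feasibleset} is convex (all constraints are linear in $(\phi_A,\Lambda_{AB})$ aside from the positivity conditions, which define convex cones), closed, and bounded (since $\tr[\phi_A]\leq 1$ and $0\leq \Lambda_{AB}\leq \phi_A^T\otimes \id_B$ together force $\tr[\Lambda_{AB}]\leq |B|$), hence compact.

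With these ingredients in hand, von Neumann's minimax theorem applies directly and yields the desired interchange. Combining with Theorem~\ref{thm:minimaxbound} then gives $\max_{M_{B|A}\in \textsc{ppt}}\min_{(\phi_A,\Lambda_{AB})\in \textsc{f}(\cN,\eps)}\tr[\Lambda_{AB} M_{B|A}]=f(\cN,\eps)\leq 1/M$, as claimed. There is no real obstacle here; the only thing to be careful about is to make the compactness argument rigorously by exhibiting an explicit norm bound on each feasible set so that closedness of the defining conditions in the finite-dimensional operator space immediately upgrades to compactness.
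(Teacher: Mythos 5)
Your proof is correct and follows exactly the same route as the paper: the Corollary is obtained from Theorem~\ref{thm:minimaxbound} by invoking von Neumann's minimax theorem, using the bilinearity of the objective and the convexity and compactness of both $\textsc{ppt}$ and $\textsc{f}(\cN,\eps)$. You supply somewhat more detail in verifying the hypotheses (explicit trace bounds to establish compactness), but the argument is the same in substance.
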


\begin{proof}[Proof of Theorem~\ref{thm:minimaxbound}]
Consider then the following function,
\begin{align}
\label{eq:f0def}
f_0(O_{AB})\defeq \sup_{M_{B|A}\in \textsc{ppt}}\tr[O_{AB} M_{B|A}],
\end{align}
defined on all bipartite operators $\{O_{AB}:0\leq O_{AB}\leq \id_{AB}\}$. 
First note that the supremum is attained, since the objective function is a continuous function $M\mapsto \tr[OM]$ on a compact set, $\textsc{ppt}$. 
Moreover, $f_0$ is continuous; specifically, it obeys
\begin{align}
|f_0(O'_{AB})-f_0(O_{AB})|\leq |A|\,||O'_{AB}-O_{AB}||_1.
\end{align}
To see this, suppose that $f_0(O_{AB})\leq f_0(O'_{AB})$, otherwise swap the two. 
Then $|f_0(O'_{AB})-f_0(O_{AB})|=f_0(O'_{AB})-f_0(O_{AB})$. 
Let $M_{B|A}'$ be the optimizer in $f_0(O_{AB}')$. 
By the variational characterization of the trace norm, 
\begin{align}
f_0(O_{AB})&\geq \tr[M_{B|A}'O_{AB}]\\
&\geq \tr[M_{B|A}'O'_{AB}]-|A|\,||O_{AB}'-O_{AB}||_1\\
&=f_0(O_{AB}')-|A|\,||O_{AB}'-O_{AB}||_1.
\end{align}

For $\Lambda_{AB}$ defined from an $(M,\eps)$ code as in \eqref{eq:codeLambda}, \eqref{eq:pptfidelitybound} implies $f_0(\Lambda_{AB})\leq \frac 1M$. Taking the infimum over $(\phi_A,\Lambda_{AB})\in \textsc{f}(\cN,\eps)$ gives a bound independent of the precise details of the code. Finally, again since $\textsc{f}(\cN,\eps)$ is convex and compact and $f_0$ is continuous, the infimum is attained. 
\end{proof}

\section{The minimax bound as a semidefinite program}
\label{sec:sdp}
In this section we describe how to formulate the minimax bound as a semidefinite program satisfying strong duality. Doing so is straightforward: We simply use the dual of the inner optimization in \eqref{eq:minimaxbound} to obtain a minimization problem, or the dual the of the inner optimization in \eqref{eq:maximinbound} to obtain a maximization problem. Ultimately we find the following
\begin{proposition}
For any channel $\cN_{B|A}$ and $0\leq \eps\leq 1$,
\begin{align}
\label{eq:sdps}
\begin{array}{crlcrl}
 f(\cN,\eps)\,\,= &{\text{minimum}} &\tr[\xi_A]&\!\!\!=&{\text{maximum}} &m(1-\eps)-n\\
& \text{subject to} & \xi_A\id_B\geq \Lambda_{AB}+\Gamma_{AB}^{T_A}, &&\text{subject to} & M_{B|A}\in \textsc{ppt},\\
&& (\phi_A,\Lambda_{AB})\in \textsc{f}(\cN,\eps),&&&mN_{B|A}
\leq  M_{B|A}+R_{AB}
,\\
&& \phi_A,\Lambda_{AB},\Gamma_{AB},\xi_A\geq 0&&&\tr_B[R_{AB}]
\leq n \id_A,\\
&&&&&m,n,R_{AB}\geq 0
\end{array}
\end{align}
\end{proposition}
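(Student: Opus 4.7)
My plan is to derive both SDP forms by two applications of semidefinite duality to the minimax expression $f(\cN,\eps)=\min_{(\phi,\Lambda)\in\textsc{f}}\max_{M\in\textsc{ppt}}\tr[\Lambda_{AB}M_{B|A}]$. The minimum form comes from dualizing the inner maximization and substituting back into the outer minimization; the maximum form then comes from dualizing the resulting SDP as a whole (equivalently, by combining the maximin corollary with the dual of the inner minimization over $(\phi,\Lambda)$).

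For the first step, I view $\max_M\tr[\Lambda M]$ as a primal SDP in the standard form of Sec.~2.2, with $\cE:M\mapsto(\tr_B[M],-M^{T_A})$ and $C=(\id_A,0_{AB})$, encoding $\tr_B[M]\leq\id_A$ and $M^{T_A}\geq 0$, together with $M\geq 0$. The adjoint acting on dual variables $(\xi_A,\Gamma_{AB})$ is $\xi_A\otimes\id_B-\Gamma_{AB}^{T_A}$, so the dual SDP minimizes $\tr[\xi_A]$ subject to $\xi_A\id_B\geq\Lambda_{AB}+\Gamma_{AB}^{T_A}$ with $\xi_A,\Gamma_{AB}\geq 0$. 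Strict primal feasibility is clear: for instance, $M=\tfrac{1}{2|B|}\id_{AB}\in\textsc{ppt}$ has $M>0$, $M^{T_A}>0$, and $\tr_B[M]=\tfrac12\id_A<\id_A$, so strong duality holds and the inner max equals its dual. Substituting into the outer minimization over $(\phi_A,\Lambda_{AB})\in\textsc{f}(\cN,\eps)$ yields the first SDP in the proposition.

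For the second equality, I dualize this combined minimization-form SDP directly. Introducing dual variables $M_{B|A}\geq 0$, $R_{AB}\geq 0$, $m\geq 0$, $n\geq 0$ for the constraints $\xi_A\id_B-\Lambda_{AB}-\Gamma_{AB}^{T_A}\geq 0$, $\phi_A^T\id_B-\Lambda_{AB}\geq 0$, $\tr[\Lambda_{AB}N_{B|A}]\geq 1-\eps$, and $\tr[\phi_A]\leq 1$, the Lagrangian regroups as
\begin{align*}
L&=\tr\!\bigl[\xi_A(\id_A-\tr_B[M])\bigr]+\tr\!\bigl[\Lambda_{AB}(M_{B|A}+R_{AB}-mN_{B|A})\bigr]\\
&\quad+\tr[\Gamma_{AB}M^{T_A}_{B|A}]+\tr\!\bigl[\phi_A(n\id_A-(\tr_B[R])^T)\bigr]+m(1-\eps)-n,
\end{align*}
using $\tr[M\Gamma^{T_A}]=\tr[M^{T_A}\Gamma]$. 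Requiring the infimum over $\xi_A,\Lambda_{AB},\Gamma_{AB},\phi_A\geq 0$ to be finite forces $\tr_B[M]\leq\id_A$, $mN_{B|A}\leq M_{B|A}+R_{AB}$, $M^{T_A}_{B|A}\geq 0$, and $\tr_B[R_{AB}]\leq n\id_A$ (since the bound is a multiple of the identity, the transpose on $A$ is cosmetic). Together with $M\geq 0$ these are exactly the constraints $M\in\textsc{ppt}$ plus the two extra inequalities in the statement, and the remaining objective is $m(1-\eps)-n$.

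The main technical obstacle is the bookkeeping of partial transposes on $A$ in the Lagrangian and the verification of Slater's condition for the second duality. For the latter I would either exhibit a strictly feasible point of the min form whenever a code with fidelity strictly above $1-\eps$ exists (slightly perturbing any feasible $(\phi,\Lambda)$ and choosing $\xi_A$ large enough to produce strict matrix inequalities), or, more cleanly, verify strict dual feasibility: for example, $M=\tfrac{1}{2|B|}\id_{AB}$ together with a sufficiently large $m,n$ and $R$ proportional to $\id_{AB}$ makes all dual inequalities strict, which suffices by the Slater condition for min-form SDPs stated in Sec.~2.2 and rules out the degenerate cases.
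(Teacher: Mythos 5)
Your proposal is correct and follows essentially the same two-stage SDP duality argument as the paper: dualize the inner maximization over $\textsc{ppt}$ to obtain the minimization form, then dualize that entire program to obtain the maximization form, verifying Slater's condition at each step. The minor differences are cosmetic. For the first duality you certify Slater via strict primal feasibility of $M=\tfrac{1}{2|B|}\id_{AB}$, whereas the paper exhibits a strictly feasible dual pair $\Gamma_{AB}=\id_{AB}$, $\xi_A=3\id_A$; both work, though in either case one should also record that the relevant optimum is finite (the paper notes $f_0\leq|A|$). For the second duality you derive the max form by a direct Lagrangian regrouping, whereas the paper reads it off from the standard primal/dual template of Sec.~2.2 with $Y=(\phi_A,\Lambda_{AB},\Gamma_{AB},\xi_A)$ and $X=(m,n,R_{AB},M_{B|A})$; your Lagrangian reproduces exactly the same adjoint constraints, including the cosmetic transpose on $\tr_B[R]$.

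One caution on your last paragraph: the first Slater alternative you offer (perturbing a feasible $(\phi_A,\Lambda_{AB})$ to make the min form strictly feasible) does not work in general. For $\eps=0$ the constraints $0\leq\Lambda_{AB}\leq\phi_A^T\id_B$ and $\tr[\phi_A]\leq 1$ together with $\tr_B[N_{B|A}]=\id_A$ force $\tr[\Lambda_{AB}N_{B|A}]\leq 1$, so the fidelity constraint can never be made strict; there is simply no strictly feasible point of the min form. Your second, ``cleaner'' route — strict feasibility of the max form combined with finiteness of $f(\cN,\eps)$ — is the correct one and is precisely what the paper does with the explicit choice $M_{B|A}=R_{AB}=\tfrac{1}{2|B|}\id_{AB}$, $n=1$, $m=\tfrac{1}{2|A||B|}$. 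Note also that in the max form, $R_{AB}$ must be chosen compatibly with $m$ and $n$: you cannot take $m$ arbitrarily large with $M$ fixed unless $R$ (and then $n$) grows accordingly, so be explicit about the ordering of choices there.
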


\begin{proof}
Let us take the former approach, dualize the inner optimization in \eqref{eq:maximinbound}, and then show that strong duality holds. Of course, strong duality must hold, as it is equivalent to the saddle point property, but we shall give a simple independent argument for strong duality based on Slater's condition. 

Observe that $f_0$ is a semidefinite program, in particular, a primal problem as we have defined it, with $X=M_{B|A}$, $K=O_{AB}$, $C=(0,\id_A)$, and $\cE(X)=(-X^{T_A},\tr_B[X])$.
Choosing for the dual variables $Y=(\Gamma_{AB},\xi_A)$, the dual of $f_0$ is 
\begin{align}
\begin{array}{crl}
\tilde f_0(O_{AB})\defeq &\underset{\Gamma_{AB},\xi_A}{\text{minimum}} &\tr[\xi_A]\\
& \text{subject to} & \xi_A\id_B\geq O_{AB}+\Gamma_{AB}^{T_A}, \\
&& \Gamma_{AB},\xi_A\geq 0.
\end{array}
\end{align}
Combining this with the outer optimization over $\textsc{f}(\cN,\eps)$ gives the minimization program in \eqref{eq:sdps}. 

The equality statement is precisely strong duality of the primal and dual forms of the inner optimization. By Slater's condition, strong duality holds if $f_0$ is finite and there exists a strictly feasible set of dual variables. 
Observe that $f_0(O_{AB})\leq |A|$, since for the optimal $M_{B|A}$ we have $f_0(O_{AB})=\tr[M_{B|A}O_{AB}]\leq \tr[M_{B|A}]\leq \tr_A \id A=|A|$. Here we have used the upper bounds $O_{AB}\leq \id_{AB}$ and $\tr_B[M_{B|A}]\leq \id_A$. Thus, the first condition is fulfilled. Meanwhile, $\Gamma_{AB}=\id_{AB}$ and $\xi_A=3\id_A$ are a strictly feasible pair. 
Thus, $\tilde f_0=f_0$ over the domain of interest.

To construct the maximization program, we simply dualize the minimzation program. 
In particular, $f(\cN,\eps)$ is a dual-form semidefinite program in the variable $Y=(\phi_A,\Lambda_{AB},\Gamma_{AB},\xi_A)$ with $C=(0,0,0,\id_A)$, $K=(1-\eps,-1,0,0)$, and 
\begin{align}
\cE^*(Y)=(\tr[N_{B|A}\Lambda_{AB}],-\tr[\phi_A],\phi_A^T\id_B-\Lambda_{AB},\xi_A\id_B-\Lambda_{AB}-\Gamma^{T_A}_{AB}).
\end{align}
Choosing primal variables $X=(m,n,R_{AB},M_{AB})$ leads to the maximization in \eqref{eq:sdps}. 

Equality again follows from Slater's condition: $f$ is finite by the minimax formulation (in particular the bound on $f_0$ used above), while 
a feasible choice of dual variables is given by $M_{AB}=R_{AB}=\frac 1{2|B|}\id_{AB}$, $n=1$, and $m=\frac 1{2|A||B|}$. 
The choice of $m$ ensures the first constraint holds strictly, since any Choi operator of a trace-preserving map satisfies $|| N_{B|A}||_\infty=|A|$. 
\end{proof}

No discussion of strong duality of semidefinite programs is complete until the complementary slackness conditions have been formulated. 
Often, these give considerable insight into the form and properties of the optimizing variables. First observe that 
\begin{align}
\cE(X)=(-n\id_A+\tr_B[R_{AB}^{T_A}]
,\,mN_{B|A}-M_{B|A}-R_{AB}
,\,-M_{AB}^{T_A},\,\tr_B[M_{B|A}]).
\end{align}
Then the conditions are easy to read off from the form of $C$ and $K$. They are 
\begin{align}
\tr[\phi_A]&=1\\
\tr[\Lambda_{AB}N_{B|A}]&=1-\eps,\\
\phi_A^T R_{AB}&=\Lambda_{AB}R_{AB}\\
\xi_AM_{B|A}&=(\Lambda_{AB}+\Gamma^{T_A}_{AB})M_{B|A}\\
n\phi_A&=\tr_B[R_{AB}^{T_A}
]\phi_A\\
M_{B|A}^{T_A}\Gamma_{AB}&=0\\
\tr_B[M_{B|A}]\xi_A&=\xi_A,\\
m N_{B|A}\Lambda_{AB}&=(M_{B|A}+R_{AB}
)\Lambda_{AB}.
\end{align}

\section{Channel symmetry}
\label{sec:symmetry}
Symmetries of the channel can greatly simplify the calculation of the minimax bound. 
First let us state precisely what we mean by channel symmetries.  
Suppose $G$ is a group, possibly a topological group, represented by operators $U_g$ on $A$ and $V_g$ on $B$. 
A channel $\cN_{B|A}$ is covariant with respect to $G$ when  
\begin{align}
V_g \cN(\cdot)V_g^* = \cN(U_g \cdot U_g^*)\qquad \forall g\in G.
\end{align}
We can write this as an invariance of the channel:
\begin{align}
\cN(\cdot)= V_g^*\cN(U_g \cdot U_g^*)V_g\qquad \forall g\in G.
\end{align}
In terms of the Choi operator, the condition is simply
\begin{align}
(U_g^T)_A\otimes (V_g^*)_B N_{B|A}(\overline{U}_g)_A\otimes (V_g)_B= N_{B|A}\qquad \forall g \in G.
\end{align}
Thus, the Choi state is a fixed point when averaging over the action of the group. 
To enforce such averaging, introduce the superoperator $\cG_{AB}$: 
\begin{align}
\cG(O_{AB})=\int_{G}\!\!\!\text{d}\mu(g)\,\,(U_g^T)_A\otimes (V_g^*)_B O_{AB}(U_g^T)^*_A\otimes (V_g)_B,
\end{align}
where  $\mu$ is the Haar measure of the group. 
Observe that $\cG^*=\cG$, since taking the adjoint of the group elements just reparameterizes the group, sending $g$ to $g^{-1}$. 

Due to the structure of the symmetrization $\cG$, we have the following 
 \begin{proposition}
 \label{prop:symmetrizefeasible}
Suppose $(\phi_A,\Lambda_{AB})\in \textsc{f}(\cN,\eps)$. Then $(\cG(\phi_A),\cG(\Lambda_{AB}))\in \textsc{f}(\cN,\eps)$. 
Similarly, $\cG(M_{B|A})\in \textsc{ppt}$ for any $M_{B|A}\in \textsc{ppt}$.
\end{proposition}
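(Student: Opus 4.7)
The plan is to verify the membership conditions constraint-by-constraint, leveraging three structural properties of $\cG$: (i) it is completely positive and trace preserving, being a Haar average of unitary conjugations by $U_g^T \otimes V_g^*$; (ii) it is self-adjoint, $\cG^* = \cG$, as already noted in the excerpt via the $g\mapsto g^{-1}$ symmetry of the Haar measure; and (iii) it fixes the Choi operator, $\cG(N_{B|A}) = N_{B|A}$, which is just covariance of $\cN$ written in Choi form.

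For $\cG(M_{B|A}) \in \textsc{ppt}$, I would dispatch each defining condition separately. Positivity follows from (i). The partial-trace bound reduces to $\tr_B[\cG(M_{B|A})] = \cG_A(\tr_B[M_{B|A}]) \leq \cG_A(\id_A) = \id_A$, using the unitarity of $V_g$ to eliminate the $B$-system action; here $\cG_A$ denotes the single-system twirl on $A$. For the PPT constraint I would push the partial transpose through the integral and observe that $(U_g^T\otimes V_g^*)^{T_A} = U_g^*\otimes V_g^*$ is still unitary, so $\cG(M_{B|A})^{T_A}$ is a Haar average of unitary conjugations of the positive operator $M_{B|A}^{T_A}$, hence positive.

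For $(\phi_A,\Lambda_{AB})\in \textsc{f}(\cN,\eps)$, positivity of $\cG(\Lambda_{AB})$ and $\cG(\phi_A)$, and the normalization $\tr[\cG(\phi_A)] = \tr[\phi_A] \leq 1$, are immediate from (i). The fidelity constraint survives via
\begin{align}
\tr[\cG(\Lambda_{AB}) N_{B|A}] = \tr[\Lambda_{AB}\,\cG^*(N_{B|A})] = \tr[\Lambda_{AB} N_{B|A}] \geq 1-\eps
\end{align}
using (ii) and (iii). The delicate step is the sandwich $\cG(\Lambda_{AB}) \leq \cG(\phi_A)^T \otimes \id_B$: applying the CP map $\cG$ to $\Lambda_{AB}\leq \phi_A^T\otimes\id_B$ yields $\cG(\Lambda_{AB}) \leq \cG_A(\phi_A^T) \otimes \id_B$, since the $\id_B$ on the right is preserved by the unitary conjugations $V_g^*(\cdot)V_g$. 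One then reads $\cG(\phi_A)$ as the natural single-system partner twirl $\cG(\phi_A)\defeq \int\!d\mu(g)\,U_g\phi_A U_g^*$ under the dual representation, and a short calculation using $U_g^{-1}=U_g^*$ and Haar invariance under $g\mapsto g^{-1}$ gives $\cG(\phi_A)^T = \cG_A(\phi_A^T)$, closing the inequality.

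The main obstacle is transpose bookkeeping: $\cG$ naturally acts via $U_g^T$ in the Choi $A$-slot, whereas the induced symmetrization of $\phi_A$ wants to involve $U_g$. Matching these through the Haar symmetry $g\mapsto g^{-1}$ is the only genuinely nontrivial point; once it is in hand, every other item is a one-line verification.
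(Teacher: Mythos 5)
Your proposal is correct and follows essentially the same route as the paper: positivity by complete positivity of $\cG$, the trace and fidelity constraints by unitarity of $V_g$ and self-adjointness/fixed-point of $\cG$ together with covariance, and the partial-transpose and sandwich conditions by pushing the (partial) transpose through the Haar average and observing the result is still an average of unitary conjugations. You are in fact more explicit than the paper about the ``transpose bookkeeping'' point: the paper's proof writes the final step of the sandwich bound as $\phi_A^T\id_B$ without visibly introducing the twirled partner $\bar\phi_A$, whereas the correct reading---which you supply---is that $\bar\phi_A$ must be taken as the $U_g(\cdot)U_g^*$ twirl of $\phi_A$ so that $\bar\phi_A^T$ matches the $U_g^T(\cdot)\overline{U}_g$ twirl of $\phi_A^T$ via the $g\mapsto g^{-1}$ Haar symmetry; spelling this out makes the closure of the inequality unambiguous.
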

\begin{proof}
Start with the latter claim, and let $\bar M_{B|A}=\cG(M_{B|A}). $The positivity condition in \eqref{eq:pptset} holds for $\bar M_{B|A}$ since $\cG$ is completely positive. For the trace condition, we have
\begin{align}
\tr_B[\bar M_{B|A}]&=\int_{G}\!\!\!\text{d}\mu(g)\,\,(U_g^T)_A \tr_B [(V_g^*)_B M_{B|A} (V_g)_B
] (U_g^T)^*_A\\
&\leq \int_{G}\!\!\!\text{d}\mu(g)\,\,(U_g^T)_A \id_A (U_g^T)^*_A\\
&=\id_A
\end{align}
For the partial transpose condition, note that for any operator $O_{AB}$, 
\begin{align}
\label{eq:aveOT}
\bar O_{AB}^{T_A}=\int_{G}\!\!\!\text{d}\mu(g)\,\,(U_g)_A\otimes (V_g^*)_B O_{AB}^{T_A}(U_g)^*_A\otimes (V_g)_B,
\end{align}
which is the action of a slightly different, yet still-completely-positive, version of $\cG$ on $O_{AB}$. Thus, $\bar M_{B|A}^{T_A}$ is positive if $M_{B|A}^{T_A}$ is. 

For the former claim, again let $\bar\phi_A=\cG_{A}(\phi_A)$ and $\bar\Lambda_{AB}=\cG_{AB}(\Lambda_{AB})$. Returning to \eqref{eq:feasibleset}, the positivity conditions $\bar \phi_A,\bar \Lambda_{AB}\geq 0$ hold because $\cG$ is completely positive.
The two trace conditions still hold because $\cN_{B|A}$ is $G$-covariant.  
It remains to show the upper bound on $\bar\Lambda_{AB}$:
\begin{align}
\bar\Lambda_{AB}&\leq \int_{G}\!\!\!\text{d}\mu(g)\,\,(U_g^T)_A\otimes (V_g^*)_B (\phi_A^T\otimes \id_B) (U_g^T)^*_A\otimes (V_g)_B \\
&=\int_{G}\!\!\!\text{d}\mu(g)\,\,(U_g^T\phi^T \overline{U}_g)_A\otimes \id_B\\
&=\big(\int_{G}\!\!\!\text{d}\mu(g)\,\,U_g^* \phi {U}_g\big)_A^T\otimes \id_B\\
&=\phi_A^T\id_B.
\end{align}
\end{proof}

As shown by Polyanskiy for the classical metaconverse~\cite{polyanskiy_saddle_2013}, we can now show that $G$-covariant quantum channels have $G$-invariant optimizers. Letting $\textsc{ppt}^G$ be the set of Choi operators of PPT-preserving channels which are invariant under $\cG$, and similarly $\textsc{f}^G(\cN,\eps)$ the intersection of $\textsc{f}(\cN,\eps)$ with $G$-invariant operators, we have
\begin{theorem}
For $G$-covariant channels $\cN_{B|A}$, we can restrict the optimizations in \eqref{eq:minimaxbound} to $\textsc{ppt}^G$ and $\textsc{f}^G(\cN_{B|A},\eps)$.
\end{theorem}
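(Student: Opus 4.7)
The plan is to leverage three ingredients: the self-adjointness $\cG^* = \cG$ noted just before the theorem; Proposition~\ref{prop:symmetrizefeasible}, which says that $\cG$ maps both feasible sets into themselves; and the linearity of the trace objective. I will first show that the outer minimization in \eqref{eq:minimaxbound} can be restricted to $\textsc{f}^G(\cN, \eps)$, and then, for $G$-invariant $\Lambda_{AB}$, that the inner maximization collapses onto $\textsc{ppt}^G$.

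For the outer step, take any $(\phi_A, \Lambda_{AB}) \in \textsc{f}(\cN, \eps)$ and form its symmetrization $(\cG(\phi_A), \cG(\Lambda_{AB}))$, which lies in $\textsc{f}^G(\cN, \eps)$ by Proposition~\ref{prop:symmetrizefeasible}. Using $\cG^* = \cG$,
\begin{align}
\max_{M \in \textsc{ppt}} \tr[\cG(\Lambda_{AB}) M] = \max_{M \in \textsc{ppt}} \tr[\Lambda_{AB} \cG(M)] \leq \max_{M' \in \textsc{ppt}} \tr[\Lambda_{AB} M'],
\end{align}
where the inequality uses $\cG(M) \in \textsc{ppt}$, again from Proposition~\ref{prop:symmetrizefeasible}. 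Hence symmetrization never increases the inner maximum, so the minimum over $\textsc{f}(\cN,\eps)$ coincides with the minimum over $\textsc{f}^G(\cN,\eps)$.

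For the inner step, fix a $G$-invariant $\Lambda_{AB}$, so that $\cG(\Lambda_{AB}) = \Lambda_{AB}$. Then for any $M \in \textsc{ppt}$,
\begin{align}
\tr[\Lambda_{AB} M] = \tr[\cG(\Lambda_{AB}) M] = \tr[\Lambda_{AB} \cG(M)],
\end{align}
and $\cG(M) \in \textsc{ppt}^G$. Combined with the trivial inclusion $\textsc{ppt}^G \subseteq \textsc{ppt}$, this yields $\max_{M \in \textsc{ppt}} \tr[\Lambda_{AB} M] = \max_{M' \in \textsc{ppt}^G} \tr[\Lambda_{AB} M']$. Chaining the two reductions gives the claim,
\begin{align}
f(\cN, \eps) = \min_{(\phi_A,\Lambda_{AB}) \in \textsc{f}^G(\cN,\eps)} \max_{M_{B|A} \in \textsc{ppt}^G} \tr[\Lambda_{AB} M_{B|A}].
\end{align}

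No step really presents an obstacle here, since Proposition~\ref{prop:symmetrizefeasible} has already done the heavy lifting; the remaining argument is just the quantum transcription of Polyanskiy's classical symmetry reduction~\cite{polyanskiy_saddle_2013}. The only subtlety worth flagging is that one could alternatively run the same trace-swap argument on the maximin form \eqref{eq:maximinbound} to separately obtain the restriction to $\textsc{ppt}^G$, and then invoke the saddle-point property to transfer it back to the minimax; the two routes are equivalent thanks to the already-established minimax equality.
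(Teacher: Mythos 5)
Your proof is correct, and the final reduction (restricting to $\textsc{ppt}^G$ once $\Lambda_{AB}$ is $G$-invariant) is essentially identical to the paper's. Where you diverge is in the outer step. The paper establishes the inequality $f_0(\cG(\Lambda_{AB}))\leq f_0(\Lambda_{AB})$ by first proving that $f_0$ is convex (as a pointwise maximum of linear functions), then that $f_0$ is constant on $G$-orbits (by showing that if $M'_{B|A}$ is optimal for $g(O_{AB})$, then $g^{-1}(M'_{B|A})$ is feasible for $f_0(O_{AB})$, and running the argument both ways), and finally applying Jensen's inequality over the Haar average. You bypass this machinery entirely: you move $\cG$ across the trace using $\cG^*=\cG$ and then observe that $\cG(M_{B|A})\in\textsc{ppt}$ by Proposition~\ref{prop:symmetrizefeasible}, so the maximum over $\textsc{ppt}$ can only shrink. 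This is a more economical argument that relies only on ingredients already stated before the theorem, and avoids re-deriving convexity and orbit-invariance of $f_0$. The paper's route does yield the slightly stronger intermediate fact that $f_0$ is constant on the whole orbit $\{g(\Lambda):g\in G\}$, not just that averaging cannot increase it, and it mirrors Polyanskiy's original presentation in the classical case more closely; but for the stated theorem both arguments deliver exactly what is needed. One small point worth making explicit if you polish this: when you write ``so the minimum over $\textsc{f}(\cN,\eps)$ coincides with the minimum over $\textsc{f}^G(\cN,\eps)$'', you should add that $\textsc{f}^G\subseteq\textsc{f}$ gives the reverse inequality, just as the paper does, so that the claimed equality of the two minima is fully justified.
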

\begin{proof}
The proof proceeds similarly to that of \cite[Theorem 20]{polyanskiy_saddle_2013}. 
To simplify notation, define 
\begin{align}
g(O_{AB})=(U_g^T)_A\otimes (V_g^*)_B O_{AB}(U_g^T)^*_A\otimes (V_g)_B.
\end{align} 
Now consider the outer optimization in \eqref{eq:minimaxbound}. First note that the function $f_0$ from \eqref{eq:f0def} is convex, as it is the pointwise maximum of linear functions.
Furthermore, $f_0$ must be constant on orbits of $G$. Suppose $M_{B|A}'$ is the optimizer for $g(O_{AB})$ for some arbitrary $g\in G$, so that $f_o(O_{AB})=\tr[M_{B|A}' g(O_{AB})]$. It follows that $g^{-1}(M'_{B|A})$ is feasible for $f_0(O_{AB})$, since independent unitary operations on $A$ and $B$ are PPT-preserving. Hence, $f_0(O_{AB})\geq f_0(g(O_{AB}))$. But the same argument implies $f_0(g(O_{AB}))\geq f_0(g^{-1}\circ g(O_{AB}))=f_0(O_{AB})$. 

Applying Jensen's inequality and taking the minimum over  $(\phi_A,\Lambda_{AB})\in \textsc{f}(\cN,\eps)$ gives
\begin{align}
\min_{(\phi_A,\Lambda_{AB})\in \textsc{f}(\cN,\eps)}f_0(\cG(\Lambda_{AB}))\leq \min_{(\phi_A,\Lambda_{AB})\in \textsc{f}(\cN,\eps)} f_0(\Lambda_{AB}).
\end{align}
By Proposition~\ref{prop:symmetrizefeasible}, we can restrict the optimization on the lefthand side to $F^G(\cN,\eps)$ and obtain
\begin{align}
\min_{(\phi_A,\Lambda_{AB})\in \textsc{f}^G(\cN,\eps)}f_0(\Lambda_{AB})\leq \min_{(\phi_A,\Lambda_{AB})\in \textsc{f}(\cN,\eps)} f_0(\Lambda_{AB}).
\end{align}
On the other hand, since we are now minimizing over a smaller set, the lefthand side of this expression cannot be smaller than the right, so equality holds.  

Next consider the inner optimization in \eqref{eq:minimaxbound}, with the additional restriction to $\textsc{f}^G(\cN,\eps)$ in the outer optimization. For $f_0(\Lambda_{AB})$ with $G$-invariant $\Lambda_{AB}$, the objective function does not change upon replacing $M_{B|A}$ by $\cG(M_{B|A})$. Proposition~\ref{prop:symmetrizefeasible} then implies that we can safely restrict the optimization to $\textsc{ppt}^G$. 
\end{proof}

\section{Examples}
\label{sec:examples}
\subsection{Qubit dephasing channel}
Here we show that both finite blocklength converse and achievability bounds for the qubit dephasing channel can be inherited from the corresponding bounds in~\cite{polyanskiy_channel_2010} for the classical binary symmetric channel (BSC). 
The dephasing channel $\cN_{B|A}$ has Kraus operators $\sqrt{1-p}\id$ and $\sqrt{p}\sigma_z$, and the Choi operator is $N_{B|A}=2(1-p)\Phi^+_{AB}+2p\Phi^-_{AB}$. Here $\Phi^+$ is the canonical maximally entangled state, i.e.\ $\Phi^+=\Phi$, and $\Phi^-=(\id\otimes \sigma_z)\Phi(\id\otimes \sigma_z)$. 
Since the Bell states are orthogonal, dephasing is essentially the BSC for phase errors.

Clearly $\cN_{B|A}$ is covariant under the action of $\sigma_z$; indeed, it is covariant under any unitary operator diagonal in the dephasing basis.  Since $\sigma_x \sigma_z\sigma_x=-\sigma_z$, it is covariant under the action of $\sigma_x$ as well. For a single qubit the relevant symmetry group is $G_1=\{\id,\sigma_x,\sigma_y,\sigma_z\}$ (up to phases, which are irrelevant since the group acts by conjugation by the Pauli operators). The corresponding $\cG_1$ on the input space has the action $\cG_1(\rho)=\pi$ for all $\rho$, where $\pi$ is the maximally mixed state. 

The memoryless extension $\cN^{(n)}_{B^n|A^n}=\cN_{B|A}^{\otimes n}$ inherits these symmetries in each input space, so $G_n=G_1\times G_1\times \cdots\times G_1$. 
Similarly, $\cG_n(\rho_n)=\pi_n$ has the effect of completely depolarizing the input state.  That is, the optimal input state $\phi_{A^n}$ for the bound is maximally mixed.
Thus, 
\begin{align}
\textsc{f}^G(\cN^{\otimes n}_{B|A},\eps)=\{\Lambda_{A^nB^n}:0\leq \Lambda_{A^nB^n}\leq \tfrac1{2^n}\id_{A^nB^n},\tr[\Lambda_{A^nB^n} N_{B|A}^{\otimes n}]\geq 1-\eps\}.
\end{align}
 Furthermore, we can restrict $\Lambda_{A^nB^n}$ to be in the support of $N_{B|A}^{\otimes n}$ without loss of generality, since feasibility will not be affected and the objective function in the inner minimization of the maximin bound \eqref{eq:maximinbound} can only decrease. 
 Staying with the maximin bound, we may choose $\cM_{B^n|A^n}$ to have Choi state ${M}_{B|A}^{\otimes n}$ with $M_{B|A}=\Phi^+_{AB}+\Phi^-_{AB}$, i.e.\ the fully dephasing channel. Defining $L_{A^nB^n}=2^n \Lambda_{A^nB^n}$, \eqref{eq:maximinbound} then yields 
\begin{align}
\begin{array}{cccrl}
\frac1M &\geq &\underset{L_{A^nB^n}}{\text{minimize}} &\tr[L_{A^nB^n} \cM_{B|A}(\Phi_{AB})^{\otimes n}]\\
&& \text{subject to} & \tr[L_{A^nB^n} \cN_{N|A}(\Phi_{AB})^{\otimes n}]\geq 1-\eps, \\
&&& 0\leq L_{A^nB^n}\leq \id_{A^nB^n}.
\end{array}
\end{align}
Letting $\omega_{AB}=\cN_{B|A}(\Phi_{AB})=(1-p)\Phi^+_{AB}+p\Phi^-_{AB}$ and $\sigma_{AB}=\cM_{B|A}(\Phi_{AB})=\tfrac12(\Phi^+_{AB}+\Phi^-_{AB})$, the righthand side is just the minimal type-II error of distinguishing $\omega_{AB}^{\otimes n}$ from $\sigma_{AB}$, for type-I error constrained to be no larger than $\eps$. That is, 
\begin{align}
\label{eq:dephasingbound}
\frac1M\geq \beta_{1-\eps}(\omega_{AB}^{\otimes n},\sigma_{AB}^{\otimes n}).
\end{align}

Since both states are diagonal in the same basis, the hypothesis test between $\omega_{AB}$ and $\sigma_{AB}$ can be recast as a test between classical distributions. 
Observe that measuring each system of $\omega_{AB}$ and $\sigma_{AB}$ in the basis of $\sigma_x$ produces the probability distributions $P_{XY}$ and $P_XQ_Y$, respectively, with $P_X$ and $Q_Y$ uniformly distributed and $P[X=Y]=1-p$.  
Moreover, we can reconstruct the original states $\omega_{AB}$ and $\sigma_{AB}$ from $P_{XY}$ and $P_XQ_Y$ with the map that sends $(X,Y)$ to $\Phi^+_{AB}$ when $X=Y$ and to $\Phi^-_{AB}$ otherwise. Therefore we have
\begin{align}
\label{eq:dephasingbsc}
  \frac1M\geq \beta_{1-\eps}(P_{XY}^{\times n},P_X^{\times n}Q_Y^{\times n}).
\end{align}
This bound is precisely the expression obtained for the binary symmetric channel by Polyanskiy, Poor, and Verd\'u~\cite[Theorem 26]{polyanskiy_channel_2010} (see also~\cite[Theorem 22]{polyanskiy_saddle_2013}). 
Thus, the dephasing channel inherits the finite blocklength converse of the BSC, and the bounds are identical if our choice for $\cM_{B^n|A^n}$ is optimal. 

Regardless of the optimality of $\cM_{B^n|A^n}$, asymptotic results such as the strong converse and second order coding rate for the dephasing channel follow directly from the classical problem, in particular Eq.\ 160 and Theorem 52 in~\cite{polyanskiy_channel_2010}, respectively.
Alternately, one can deduce the strong converse property more immediately by simply invoking Stein's lemma on \eqref{eq:dephasingbound} or \eqref{eq:dephasingbsc}. (Note that the strong converse was first shown in~\cite{tomamichel_strong_2014}.)

Finally, the achievable bounds for dephasing are also at least as good as those of the classical BSC, simply because any classical code for the BSC can be regarded as correcting phase errors and applied to the dephasing channel. More specifically, the classical code can be used as part of a CSS-like quantum code, as described in~\cite{devetak_private_2005}. The ``error correction'' part of the code (see the Remark prior to \S V) is just the code for the BSC, applied to the $\sigma_x$ basis (i.e.\ with inputs diagonal in this basis). We can dispense with the ``privacy amplification'' part of the code, since it may be easily verified that the complement of the dephasing channel has a constant output on inputs diagonal in the $\sigma_x$ basis. We require a CSS-like code and not a proper CSS code because the classical BSC code need not be a linear code. Note also that, importantly, the guessing probability of the classical code is equal to the fidelity of the quantum code for this channel. For a more detailed and convincing discussion of this point, see~\cite{tomamichel_quantum_2015}.

\subsection{Erasure channel}
For the qubit erasure channel we can inherit a converse bound from the metaconverse of the classical binary erasure channel (BEC). 
The qubit erasure channel has qubit input and output $B$ of dimension three, namely $B=A\oplus \mathbb C$. The extra dimension indicates to the receiver that the input was erased.   
The Choi state of the erasure channel with probability $p$ is simply $N_{B|A}=2(1-p)\Phi_{AB}+2p \pi_A\otimes \proj e_B$, where $\ket{e}$ is the additional vector in $B$. 
The channel is covariant with respect to action by any unitary on the input, with corresponding inverse on the output, plus dephasing of the output into the $A$ and $\ket{e}$ subspaces. 
Therefore, the optimal input state is the maximally mixed state. Let $\omega_{AB}^{\otimes n}$ be the output of the $n$-fold application of $\cN_{B|A}$ and let $\sigma_{A^nB^n}$ be the output of the PPT map $\cM_{B^n|A^n}$. 

As with the dephasing channel, consider a measurement of $A^n$ and $B^n$ in the standard basis and call the output random variables $X^n$ and $Y^n$, respectively. 
From $\omega_{AB}^{\otimes n}$ we obtain the distribution $P_{XY}^{\otimes n}$, with $P_{X}$ uniform and $Y=X$ with probability $1-p$ and equal to $e$ with probability $p$. 
Note that we can recover $\omega_{AB}$ from $P_{XY}$ by employing the map which produces $\pi_A\otimes \proj e_B$ when $Y=e$ and otherwise $\Phi^+_{AB}$.
Now let $\sigma_{A^nB^n}$ be the state obtained by this map for the distribution $P_{X^n}Q_{Y^n}$ with $Q_{Y^n}$ the optimal choice for the metaconverse of the classical BEC, Eq.~168 of~\cite{polyanskiy_saddle_2013}. 
Due to the product form of the classical distribution, $\sigma_{A^nB^n}$ can be obtained from a PPT channel acting on the maximally entangled state. Thus, we have
$\beta_{1-\eps}(\omega_{AB}^{\otimes n},\sigma_{A^nB^n})=\beta_{1-\eps}(P_{XY}^{\otimes n},P_{X}^{\otimes n}Q_{Y^n})$, where the latter quantity appears in the converse bound for the classical BEC given in~\cite{polyanskiy_saddle_2013}.
Therefore, by the minimax bound we obtain
\begin{align}
\frac1M\geq \beta_{1-\eps}(P_{XY}^{\otimes n},P_{X}^{\otimes n}Q_{Y^n}),
\end{align}
meaning the minimax bound for the BEC also applies to the qubit erasure channel with PPT assistance. Again, we may infer asymptotic statements such as the strong converse and second order coding rates from this bound. 



\subsection{Depolarization}
The depolarizing channel has Choi state $N_{B|A}=2(1-p)\Phi^+_{AB}+\frac{2p}{3}(\Phi^-_{AB}+\Psi^+_{AB}+\Psi^-_{AB})$, where $p$ is the probability of depolarization, and $\Psi^\pm$ are obtained from $\Phi^+$ by conjugation with $\sigma_x$ and $\sigma_y$, respectively.
The symmetry group of this channel includes all unitary operations, meaning that the optimal $\phi_A$ is again the maximally mixed state. If we choose $M_{B^n|A^n}=M_{B|A}^{\otimes n}$ with $M_{B|A}=\Phi^+_{AB}+\frac 13(\Phi^-_{AB}+\Psi^+_{AB}+\Psi^-_{AB})$, the minimax bound involves the optimal hypothesis test between $n$ copies of $\omega_{AB}=(1-p)\Phi^+_{AB}+\frac p 3(\Phi^-_{AB}+\Psi^+_{AB}+\Psi^-_{AB})$ and $n$ copies of $\sigma_{AB}=\frac12\Phi^+_{AB}+\frac16(\Phi^-_{AB}+\Psi^+_{AB}+\Psi^-_{AB})$:
As in the case of dephasing, we can convert the hypothesis test between $\omega_{AB}$ and $\sigma_{AB}$ into a test between classical distributions, in fact precisely those distributions which were used in the dephasing example. This follows by considering the map which generates $\Phi^+$ when $X=Y$ and otherwise randomly generates one of the other Bell states when $X\neq Y$. Therefore, we obtain the same bound, \eqref{eq:dephasingbsc}, for depolarization as for dephasing.

This raises the question of whether PPT assistance can turn the depolarizing channel into the dephasing channel. To investigate this further, a sensible first step would be to establish optimality of the two bounds, to ensure they are truly equivalent. 

%

\section{Discussion}

We have derived a minimax bound for the size of a PPT-assisted quantum code given a target entanglement fidelity, very much along the lines of the classical bound by Polyanskiy, Poor, and Verd\'u~\cite{polyanskiy_channel_2010}. 
The restriction to PPT-assistance comes from the use of Rains's bound, Lemma~\ref{lem:rainsppt}, though in principle, the bound applies to all channels $\cM_{B|A}$ which deliver a state having overlap $1/M$ with the maximally entangled state. Focussing on PPT has the advantage that the PPT conditions can be phrased as linear constraints, and lead to a semidefinite program formulation of the bound.  

It would be desirable to incorporate the PPT constraint into the feasible set $\textsc{f}(\cN,\eps)$ itself, so as to tighten the bound. Additional constraints can indeed be found along the lines of Leung and Matthews~\cite{leung_power_2014}. 
However, it appears to be impossible to incorporate such additional constraints on $\textsc{f}(\cN,\eps)$ and obtain a bound, like that of Theorem~\ref{thm:minimaxbound}, in which we optimize the code size for fixed target fidelity. 
The difficulty is that the further constraints on $\textsc{f}(\cN,\eps)$ directly involve $M$, the size of the code. Leung and Matthews avoid this problem by optimizing the target fidelity for fixed code size, rather than the other way around.

The analog of this issue in the classical case is that the metaconverse in \cite{polyanskiy_channel_2010} also applies to non-signalling assisted codes~\cite{matthews_linear_2012,polyanskiy_saddle_2013}, not just unassisted codes. There, however, no great gain is to be had by adding further constraints to the analog of $\textsc{f}(\cN,\eps)$, potentially tightening the bound: The $1/M$ bound is already quite tight at moderate blocklengths for channels of interest, as shown in~\cite{polyanskiy_channel_2010}. 



One might also hope to obtain a useful bound for unassisted codes by subsituting $\id_A\otimes \sigma_B$ for the output of the PPT channel $\cM_{B|A}$. 
Despite its normalization, this operator also has the correct $1/M$ overlap with the maximally entangled state. Moreover, it would lead to a hypothesis-testing quantity reminiscent of the coherent information, whose regularized version is part of the formula for quantum channel capacity. However, the resulting optimization is not a semidefinite program, as now we will have to set $M_{B|A}=\phi^{-1}_A\otimes \sigma_B$, leading to $\phi_A^{-1}$ appearing in the objective function along with $\Lambda_{AB}$. Concretely, we obtain
\begin{align}
\frac 1M 
&\geq \min_{\phi_A,\Lambda_{AB}\in \textsc{f}(\cN,\eps)}\,\max_{\sigma_B}\,\tr[\Lambda_{AB}\phi_A^{-1}\otimes \sigma_B]\\
&=\min_{(\phi_A,\Gamma_{AB})\in \textsc{f}'(\cN,\eps)}\,\max_{\sigma_B}\,\tr[\Gamma_{AB}\id_A\otimes \sigma_B],
\end{align}
where $\textsc{f}'(\cN,\eps)$ consists of normalized states $\phi_A$ and $\Gamma_{AB}$ with $0\leq \Gamma_{AB}\leq \id_{AB}$ such that $\tr[\Gamma_{AB}\phi_A^{1/2}\hat N_{B|A}\phi_A^{1/2}]\geq 1-\eps$. 
Furthermore, the symmetry arguments employed in Proposition~\ref{prop:symmetrizefeasible} no longer go through, making the resulting bound difficult to work with. This difficulty is perhaps to be expected, since otherwise symmetrization might lead to a single-letter bound in terms of the coherent information for general channels, which is known to be false. 

The same difficulty applies to formulating an SDP-based bound using any PPT state, not just the output of a PPT channel $\cM_{B|A}$ on the input $\phi_A$ as done in Theorem~\ref{thm:minimaxbound}. This is the approach taken by Tomamichel and Berta to obtain second-order coding rates for simple channels in~\cite{tomamichel_quantum_2015v1}. 
In the notation of this paper, their bound can be expressed as 
\begin{align}
\label{eq:fullpptbound}
\min_{(\phi_A,\Gamma_{AB})\in \textsc{f}'(\cN_{B|A},\eps)}\,\max_{\sigma_{AB}\in\textsc{ppt}} \tr[\Gamma_{AB} \sigma_{AB}]\leq \frac 1M,
\end{align}
with the same $\textsc{f}'$ as in the previous paragraph. (Actually, we have interchanged a minimization over $\Gamma$ with the maximization over $\sigma$, but this is permissible by the minimax theorem.) 
Nevertheless, symmetry arguments do go through in this case, and can be employed to infer that optimal input states $\phi_A$ can be chosen to be invariant under the channel symmetry group. 
For highly symmetric channels such qubit dephasing, depolarization, and erasure, this fixes $\phi_A$ to the mixed state. Then the nonlinearities of the optimization disappear and the resulting bound is identical to the minimax bound~\eqref{eq:minimaxbound}. 
Indeed, for these channels, one could use the results of \S\ref{sec:examples} to more quickly obtain their results regarding second-order coding rates. 
The form of the bound also implies that the minimax bound can be obtained by loosening \eqref{eq:fullpptbound}, restricting the optimization of $\sigma_{AB}$ to states of the form $\sigma_{AB}=\phi_A^{1/2}\hat M_{B|A}\phi_A^{1/2}$ for some PPT channel $\cM_{B|A}$. Equivalently, $\sigma_{AB}$ must be PPT and satisfy $\tr_B[\sigma_{AB}]=\phi_A$.  


\vspace{5mm}
{\bf Acknowledgements.} The author is grateful to David Sutter, Marco Tomamichel, and William Matthews for useful discussions. This work was supported by the Swiss National Science Foundation (through the National Centre of Competence in Research `Quantum Science and Technology') and by the European Research Council (grant No.~258932). 

\printbibliography[heading=bibintoc,title=References]
\end{document}